\newcommand{\RN}[1]{%
  \textup{\expandafter{\romannumeral#1}}%
}
\tikzset{
  block/.style    = {draw, thick, rectangle, minimum width = 3em},
  sblock/.style      = {draw, thick, rectangle, minimum height = 3em,
    minimum width = 3em}, 
}
\newcommand\remove[1]{}
\newtheorem{definition}{Definition}
\newtheorem{proposition}{Proposition}
\newtheorem{lemma}{Lemma}
\newtheorem{remark}{Remark}
\newtheorem{cnstr}{Construction}
\def\mathbi#1{{\textbf{\textit #1}}}
\newcommand{\bB}{\mathbb{B}}
\newcommand{\bE}{\mathbb{E}}
\newcommand{\bP}{\mathbb{P}}
\newcommand{\cA}{\mathcal{A}}
\newcommand{\cC}{\mathcal{C}}
\newcommand{\cM}{\mathcal{M}}
\newcommand{\cQ}{\mathcal{Q}}
\newcommand{\cR}{\mathcal{R}}
\newcommand{\cW}{\mathcal{W}}
\DeclareMathOperator{\Est}{Est}
\DeclareMathOperator{\sign}{sign}
\DeclareMathOperator{\spn}{span}
\DeclareMathOperator{\LLR}{LLR}
\DeclareMathOperator{\Proj}{Proj}
\DeclareMathOperator{\argmax}{argmax}
\begin{document}
\title{Recursive projection-aggregation decoding of Reed-Muller codes}

\author{Min Ye \and \hspace*{.4in} Emmanuel Abbe}

\maketitle
{\renewcommand{\thefootnote}{}\footnotetext{

\vspace{-.2in}
 
\noindent\rule{1.5in}{.4pt}

A preliminary version of this paper was presented at the 2019 IEEE International Symposium on Information Theory, July 2019, Paris, France \cite{Ye19}.

M. Ye is with the Data Science and Information Technology Research Center, Tsinghua-Berkeley Shenzhen Institute, Shenzhen, China. Email: yeemmi@gmail.com

E. Abbe is with the Mathematics Institute and the School of Computer and Communication Sciences at EPFL, Switzerland, and the Program in Applied and Computational Mathematics and the Department of Electrical Engineering in Princeton University, USA. 
}

\renewcommand{\thefootnote}{\arabic{footnote}}
\setcounter{footnote}{0}

\begin{abstract}
We propose a new class of efficient decoding algorithms for Reed-Muller (RM) codes over binary-input memoryless channels. 
The algorithms are based on  projecting the code on its cosets, recursively decoding the projected codes (which are lower-order RM codes), and aggregating the reconstructions (e.g., using majority votes). We further provide extensions of the algorithms using list-decoding. 

We run our algorithm for AWGN channels and Binary Symmetric Channels at the short code length ($\le 1024$) regime for a wide range of code rates. Simulation results show that in both low code rate and high code rate regimes, the new algorithm  outperforms the widely used decoder for polar codes (SCL+CRC) with the same parameters. The performance of the new algorithm for RM codes in those  regimes is in fact close to that of the maximal likelihood decoder. Finally, the new decoder naturally allows for parallel implementations.
\end{abstract}

\section{Introduction}\label{sect:intro}

Reed-Muller (RM) codes are among the oldest families of error-correcting codes \cite{Reed54}. 
The recent breakthrough of polar codes \cite{Arikan09} has brought the attention back to RM codes, due to the closeness of the two codes. RM codes  have  in particular the advantage of having a simple and universal code construction, and promising performances were demonstrated in several works \cite{Arikan08,Mondelli14}, with a scaling law conjectured to be comparable of that of random codes.

RM codes do not possess yet the  generic analytical framework of polar codes (i.e., polarization theory). It was recently shown that RM codes achieve capacity on the Binary Erasure Channel (BEC) at constant rate \cite{Kudekar17}, as well as for extremal rates for BEC and Binary Symmetric Channels (BSC)  \cite{Abbe15}, but obtaining such results for a broader class of communication channels and rates  remains  open. Recent progress was made on these questions with a polarization approach to RM codes shown in \cite{AY18}.
See also \cite{ASY20} for a recent survey on RM codes.

Various decoding algorithms have been proposed for  RM codes, starting with Reed algorithm \cite{Reed54,Macwilliams77}, and four important more recent line of works including automorphism group based decoding \cite{Sidel92,Loidreau04,Sakkour05}, recursive list-decoding \cite{Dumer04,Dumer06,Dumer06a}, a new Berlekamp-Welch type of algorithm \cite{Saptharishi17,Sberlo18}, and a new algorithm utilizing minimum-weight parity checks \cite{Santi18}.
In particular, \cite{Sidel92,Dumer04,Dumer06,Dumer06a,Saptharishi17,Sberlo18} give fairly powerful theoretical guarantees for efficient decoding of RM codes in specific regimes.
However, there is not a thorough comparison between the performance of RM codes under these decoders and the performance of the widely used CRC-aided polar codes under the Successive Cancellation List (SCL) decoders \cite{Tal15}.

In this paper, we propose a new class of decoding algorithms for Reed-Muller codes over any binary-input memoryless channels and compare its performance with polar codes. 
The new algorithms are based on recursive projections and aggregations of cosets decoding, exploiting the self-similarity of RM codes, and are extended  with Chase  list-decoding algorithms \cite{Chase72}. 
We run our new algorithms at the short code length ($\le 1024$) regime for a wide range of code rates.
Simulation results show that
the new algorithms improve on the widely used decoding algorithm for polar codes \cite{Tal15} in both low code rate and high code rate regimes.
These are the type of regimes where polar codes are planned to enter the 5G standards \cite{3gpp} as well as relevant regimes for applications in the Internet of Things (IoT).

More specifically, we compare our new algorithm for RM codes with the Successive Cancellation List (SCL) decoder for CRC-aided polar codes \cite{Tal15}, where we set the CRC size to take  optimal  values\footnote{The optimal CRC size depends on the choice of code length and rate.}. For AWGN channels, our new algorithm has about $0.5$dB gain (more in some cases) over polar codes in  various short code length ($\le 1024$) and low code rate ($\le 0.5$) regimes, and similar improvements are also obtained for BSC channels.
Moreover, the performance of our new decoding algorithm is comparable to the best previously known algorithms for RM codes \cite{Dumer06a}.

In the above regimes, the decoding error probability of  our new algorithm is in fact  shown to  be close to that of the Maximal Likelihood decoder on  RM codes. Some extensions and variants to potentially further improve the performance are also discussed, as well as possible extensions of the projection-aggregation algorithms to other families of codes.

In Section~\ref{sect:hlvl}, we give a high level description of the new type of algorithms. In Section~\ref{sect:BSC}, we present decoding algorithm for BSC channels. In Section~\ref{sect:gen} we generalize the algorithms to decode RM codes over any binary-input channel. Finally, in Section~\ref{sect:sim} we present simulation results. In addition to the previously mentioned improvements over polar codes, we also empirically validate the improved scaling-law of RM codes over polar codes on BSC channels \cite{Hassani18}.

\section{A high-level description of the new algorithms}  \label{sect:hlvl}

We begin with some notation and background on RM codes. In this paper, we use $\oplus$ to denote sums over $\mathbb{F}_2$.
Let us consider the polynomial ring $\mathbb{F}_2[Z_1,Z_2,\dots,Z_m]$ of $m$ variables. Since $Z^2=Z$ in $\mathbb{F}_2$, the following set of $2^m$ monomials forms a basis of $\mathbb{F}_2[Z_1,Z_2,\dots,Z_m]$:
$$
\{\prod_{i\in A}Z_i: A\subseteq [m] \}, \text{~where~} \prod_{i\in \emptyset}Z_i :=1.
$$
Next we associate every subset $A\subseteq [m]$ with a row vector $\mathbi{v}_m(A)$ of length $2^m$, whose components are indexed by a binary vector $\mathbi{z}=(z_1,z_2,\dots,z_m) \in \{0,1\}^m$.
The vector $\mathbi{v}_m(A)$ is defined as follows:
\begin{equation}\label{eq:gg}
\mathbi{v}_m(A,\mathbi{z}) = \prod_{i\in A} z_i,
\end{equation}
where $\mathbi{v}_m(A,\mathbi{z})$ is the component of $\mathbi{v}_m(A)$ indexed by $\mathbi{z}$,
i.e., $\mathbi{v}_m(A,\mathbi{z})$ is the evaluation of the monomial $\prod_{i\in A}Z_i$ at $\mathbi{z}$.
For $0\le r \le m$, the set of vectors 
$$
\{\mathbi{v}_m(A):A\subseteq[m],|A|\le r\}
$$
forms a basis of the $r$-th order Reed-Muller code $\cR\cM(m,r)$ of length $n:=2^m$
and dimension $\sum_{i=0}^r \binom{m}{i}$.

\begin{definition} \label{def:kee}
 The $r$-th order Reed-Muller code $\cR\cM(m,r)$ code is defined as the following set of binary vectors 
$$
\cR\cM(m,r) := \left\{\sum_{A\subseteq[m],|A|\le r}u(A) \mathbi{v}_m(A): u(A)\in\{0,1\} 
\text{~~for all~} A\subseteq[m],|A|\le r\right\}.
$$
\end{definition}

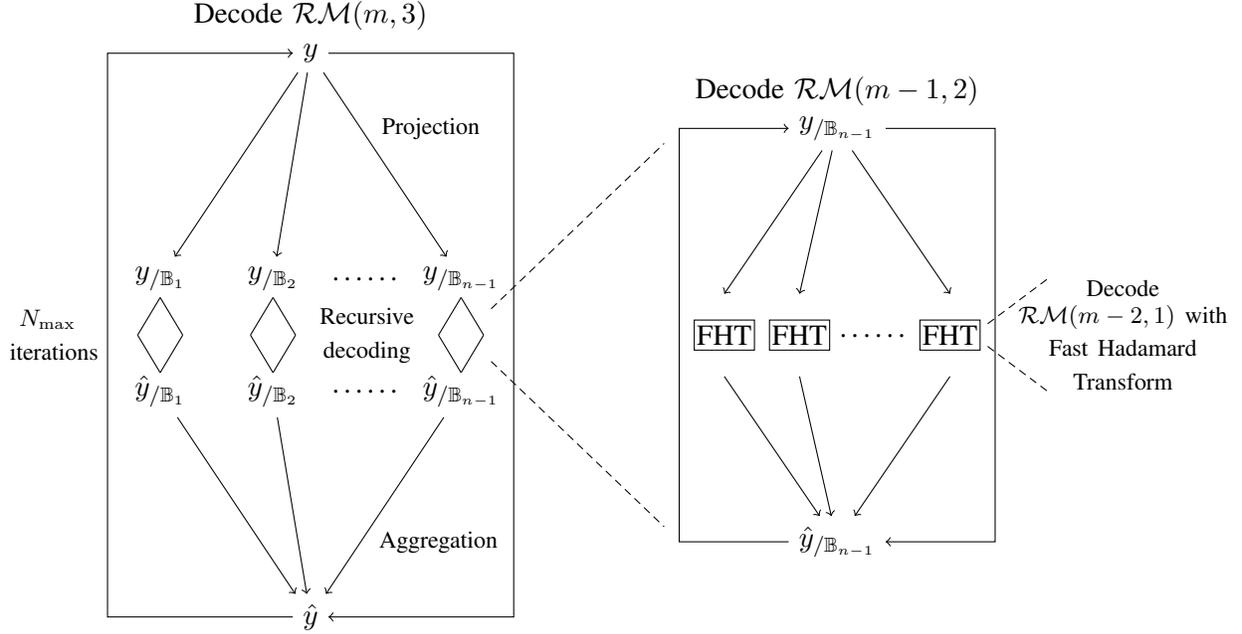
\begin{figure}
\centering
\begin{tikzpicture}
\node at (0, 9.5) [] {Decode $\cR\cM(m,3)$};
\node at (-3.5, 5.25) [text width=1cm,align=center]  {\footnotesize{$N_{\max}$ \\ iterations}} ;
\node at (0.75, 5.25) [text width=1.5cm,align=center] {\footnotesize{Recursive decoding}};
\node at (0.75, 6) [] {\dots\dots};
\node at (0.75, 4.5) [] {\dots\dots};
\node at (1.6, 8) [] {\footnotesize{Projection}};
\node at (1.7, 2.5) [] {\footnotesize{Aggregation}};
\node at (0,9) [] (y) {$y$} ;
\node at (-2,6) [] (yb1) {$y_{/\mathbb{B}_1}$};
\node at (-0.5,6) [] (yb2) {$y_{/\mathbb{B}_2}$};
\node at (2,6) [] (ybn) {$y_{/\mathbb{B}_{n-1}}$};
\node at (-2,4.5) [] (hyb1) {$\hat{y}_{/\mathbb{B}_1}$};
\node at (-0.5,4.5) [] (hyb2) {$\hat{y}_{/\mathbb{B}_2}$};
\node at (2,4.5) [] (hybn) {$\hat{y}_{/\mathbb{B}_{n-1}}$};
\node at (0, 1.5) [] (hy) {$\hat{y}$};
\draw [->] (y)--(yb1);
\draw [->] (y)--(yb2);
\draw [->] (y)--(ybn);
\draw [->] (hyb1)-- (hy);
\draw [->] (hyb2)-- (hy);
\draw [->] (hybn)-- (hy);
\draw [->] (hy) -- ++(-2.7, 0) -- ++(0, 7.5) -- (y);
\draw [->] (y) -- ++(2.7, 0) -- ++(0, -7.5) -- (hy);

\draw [-] (-2, 5.75)--(-2.3, 5.25);
\draw [-] (-2.3, 5.25)--(-2, 4.75);
\draw [-] (-2, 5.75)--(-1.7, 5.25);
\draw [-] (-1.7, 5.25)--(-2, 4.75);

\draw [-] (-0.5, 5.75)--(-0.8, 5.25);
\draw [-] (-0.8, 5.25)--(-0.5, 4.75);
\draw [-] (-0.5, 5.75)--(-0.2, 5.25);
\draw [-] (-0.2, 5.25)--(-0.5, 4.75);

\draw [-] (2, 5.75)--(1.7, 5.25);
\draw [-] (1.7, 5.25)--(2, 4.75);
\draw [-] (2, 5.75)--(2.3, 5.25);
\draw [-] (2.3, 5.25)--(2, 4.75);

\node at (7, 8.5) [] {Decode $\cR\cM(m-1,2)$};
\node at (7.5, 5.25) [] {\dots\dots};

\node at (7,8) [] (ys) {$y_{/\mathbb{B}_{n-1}}$} ;
\node at (7,2.5) [] (hys) {$\hat{y}_{/\mathbb{B}_{n-1}}$} ;
\draw [->] (ys)--(5.5,5.8);
\draw [->] (ys)--(6.5,5.8);
\draw [->] (ys)--(8.5,5.8);

\draw [->] (5.5, 4.7)--(hys);
\draw [->] (6.5, 4.7)--(hys);
\draw [->] (8.5, 4.7)--(hys);

\node at (5.5, 5.25) [] {FHT};
\node at (6.5, 5.25) [] {FHT};
\node at (8.5, 5.25) [] {FHT};

\draw [] (5.1, 5.05) rectangle ++ (0.8, 0.4);
\draw [] (6.1, 5.05) rectangle ++ (0.8, 0.4);
\draw [] (8.1, 5.05) rectangle ++ (0.8, 0.4);

\draw [->] (hys) -- ++(-2.1, 0) -- ++(0, 5.5) -- (ys);
\draw [->] (ys) -- ++(2.1, 0) -- ++(0, -5.5) -- (hys);

\draw [densely dashed] (2.4, 5.6) -- (4.7, 7.8);
\draw [densely dashed] (2.4, 4.9) -- (4.7, 2.7);

\draw [densely dashed] (9, 5.4) -- (9.8, 6);
\draw [densely dashed] (9, 5.1) -- (9.8, 4.5);

\node at (10.8, 5.25) [text width=3cm,align=center] 
{\footnotesize{Decode $\cR\cM(m-2,1)$ with\\ Fast Hadamard Transform}};

\end{tikzpicture}
\caption{Recursive Projection-Aggregation decoding algorithm for third order RM codes} \label{fig:highlvl}
\end{figure}

\begin{algorithm}
\caption{The \texttt{RPA\_RM} decoding function for BSC}    \label{alg:highlvl}
\textbf{Input:} The corrupted codeword $y=(y(\mathbi{z}), \mathbi{z}\in\mathbb{E})$; the parameters of the Reed-Muller code $m$ and $r$; the maximal number of iterations $N_{\max}$

\textbf{Output:} The decoded codeword $\hat{c}$

\vspace*{0.05in}
\begin{algorithmic}[1]
\For {$j=1,2,\dots,N_{\max}$} 
\State $y_{/\mathbb{B}_i} \gets \Proj(y,\mathbb{B}_i)$ for $i=1,2,\dots,2^m-1$     \Comment{Projection}
\State $\hat{y}_{/\mathbb{B}_i} \gets \texttt{RPA\_RM}(y_{/\mathbb{B}_i},m-1,r-1,N_{\max})$ for $i=1,2,\dots,2^m-1$ 
\Comment{Recursive decoding}
\State \Comment{If $r=2$, then we use the Fast Hadamard Transform to decode the first-order RM code \cite{Macwilliams77}}
\State $\hat{y} \gets \texttt{Aggregation}(y,\hat{y}_{/\mathbb{B}_1},\hat{y}_{/\mathbb{B}_2}\dots,\hat{y}_{/\mathbb{B}_{n-1}})$           \Comment{Aggregation}
\If {$y=\hat{y}$}
\State \textbf{break}  \Comment{$y=\hat{y}$ means that the algorithm already converges to a fixed (stable) point}
\EndIf
\State $y \gets \hat{y}$
\EndFor
\State $\hat{c} \gets \hat{y}$
\State \textbf{return} $\hat{c}$
\end{algorithmic}
\end{algorithm}

In other words, each vector $\mathbi{v}_m(A)$ consists of all the evaluations of the monomial $\prod_{i\in A}Z_i$ at all the points in the vector space $\mathbb{E}:=\mathbb{F}_2^m$, and each codeword $c\in\cR\cM(m,r)$ corresponds to an $m$-variate polynomial with degree at most $r$.
The coordinates of the codeword $c$ are also indexed by the binary vectors $\mathbi{z}\in\mathbb{E}$, and we write $c=(c(\mathbi{z}), \mathbi{z}\in\mathbb{E})$.
Let $\mathbb{B}$ be an $s$-dimensional subspace of $\mathbb{E}$, where $s\le r$.
The quotient space $\mathbb{E}/\mathbb{B}$ consists of all the cosets of $\mathbb{B}$ in $\mathbb{E}$, where every coset $T$ has form $T=\mathbi{z}+\mathbb{B}$ for some $\mathbi{z}\in\mathbb{E}$.
For a binary vector $y=(y(\mathbi{z}), \mathbi{z}\in\mathbb{E})$, we define its projection on the cosets of $\mathbb{B}$ as
\begin{equation}  \label{eq:qb}
y_{/\mathbb{B}} =\Proj(y,\mathbb{B}) := \Big( y_{/\mathbb{B}}(T), T\in\mathbb{E}/\mathbb{B} \Big), \text{~where~} y_{/\mathbb{B}}(T) := \bigoplus_{\mathbi{z}\in T} y(\mathbi{z} )
\end{equation}
is the binary vector obtained by summing up all the coordinates of $y$ in each coset $T\in\mathbb{E}/\mathbb{B}$. Here the sum is over $\mathbb{F}_2$ and the dimension of $y_{/\mathbb{B}}$ is $n/|\mathbb{B}|$.

In the next section, we will show that if $c$ is a codeword of $\cR\cM(m,r)$, then $c_{/\mathbb{B}}$ is a codeword of $\cR\cM(m-s,r-s)$, where $s$ is the dimension of $\mathbb{B}$.
Our new decoding algorithm makes use of the case $s=1$, namely, the one-dimensional subspaces.
More precisely, let $y=(y(\mathbi{z}), \mathbi{z}\in\mathbb{E})$ be the output vector of transmitting a codeword of $\cR\cM(m,r)$ over some BSC channel.
Our decoding algorithm is defined in a recursive way: For every one-dimensional subspace $\mathbb{B}$, we first obtain the projection $y_{/\mathbb{B}}$, and then we use the decoding algorithm for $\cR\cM(m-1,r-1)$ to decode $y_{/\mathbb{B}}$, where the decoding result is denoted as $\hat{y}_{/\mathbb{B}}$.
Since every one-dimensional subspace of $\mathbb{E}$ consists of $0$ and a non-zero element, there are $n-1$ such subspaces in total.
After the projection and recursive decoding steps, we obtain $n-1$ decoding results $\hat{y}_{/\mathbb{B}_1}, \hat{y}_{/\mathbb{B}_2}, \dots, \hat{y}_{/\mathbb{B}_{n-1}}$.
Next we use a majority voting scheme to aggregate these decoding results together with $y$ to obtain a new estimate $\hat{y}$ of the original codeword.
Finally we update $y$ as $\hat{y}$, and iterate the whole procedure for up to $N_{\max}$ rounds.
Notice that if $y=\hat{y}$ (see line 6), then $y$ is a fixed (stable) point of this algorithm and will remain unchanged for the next iterations. In this case we should exit the for loop on line 1 (see line 6--8).
In practice we set the maximal number of iterations $N_{\max}=\lceil m/2 \rceil$ to prevent the program from running into an infinite loop, and typically $\lceil m/2 \rceil$ iterations are enough for the algorithm to converge to a stable $y$.
This high-level description is summarized in Fig.~\ref{fig:highlvl} and Algorithm~\ref{alg:highlvl}.
While this description focuses on the decoding algorithm over BSC, a natural extension of this algorithm bases on log-likelihood ratios (LLRs) allows us to decode RM codes over any binary-input memoryless channels, including the AWGN channel; see Section~\ref{sect:gen} for details.

\subsection{List decoding procedure \cite{Chase72}}   \label{sect:highlist}
Here we recap (a version of) the list decoding procedure proposed by Chase \cite{Chase72} that can further decrease the decoding error probability. Suppose that we have a unique decoding algorithm \texttt{decodeC} for some code $\cC$ over some binary-input memoryless channel $W:\{0,1\}\to\cW$. Without loss of generality, assume that \texttt{decodeC} is based on the LLR vector of the channel output, where the LLR of an output symbol $x\in\cW$ is defined as
\begin{equation}\label{eq:defllr}
\LLR(x):=\ln\Big(\frac{W(x|0)}{W(x|1)}\Big).
\end{equation}
Clearly, if $|\LLR(x)|$ is small, then $x$ is a noisy symbol, and if $|\LLR(x)|$ is large, then $x$ is relatively noiseless.

The list decoding procedure works as follows. 
Suppose that $y=(y_1,y_2,\dots,y_n)$ is the output vector when we send a codeword of $\cC$ over the channel $W$. We first sort $|\LLR(y_i)|,i\in[n]$ from small to large. Without loss of generality, let us assume that $|\LLR(y_1)|,|\LLR(y_2)|,|\LLR(y_3)|$ are the three smallest components in the LLR vector, meaning that $y_1,y_2$ and $y_3$ are the three most noisy symbols in the channel outputs (we take three arbitrarily). Next we enumerate all the possible cases of the first three bits of the codeword $c=(c_1,c_2,\dots,c_n)$: The first three bits $(c_1,c_2,c_3)$ can be any vector in $\mathbb{F}_2^3$, so there are 8 cases in total, and for each case we change the value of $\LLR(y_1),\LLR(y_2),\LLR(y_3)$ according to the values of $c_1,c_2,c_3$. More precisely, we set $\LLR(y_i)=(-1)^{c_i} L_{\max}$ for $i=1,2,3$, where $L_{\max}$ is some large real number. In practice, we can choose $L_{\max}:=\max(|\LLR(y_i)|,i\in[n])$ or $L_{\max}:=2\max(|\LLR(y_i)|,i\in[n])$.
For each of these 8 cases, we use \texttt{decodeC} to obtain a decoded codeword, and we denote them as $\hat{c}^{(1)},\hat{c}^{(2)},\dots,\hat{c}^{(8)}$.
Finally, we calculate the posterior probability of $W^n(y|\hat{c}^{(i)}), 1\le i\le 8$, and choose the largest one as the final decoding result, namely, we perform a maximal likelihood decoding among the 8 candidates in the list.

When we apply this list decoding procedure together with Algorithm~\ref{alg:highlvl} to decode RM codes, the decoding error probability is typically close to that of the Maximal Likelihood decoder.


\section{Decoding algorithm for BSC} \label{sect:BSC}

We begin with the definition of the quotient code. Then we show that the quotient code of an RM code is also an RM code.
\begin{definition}
Let $s\le r\le m$ be integers, and let $\mathbb{B}$ be an $s$-dimensional subspace of $\mathbb{E}:=\mathbb{F}_2^m$. We define the quotient code
$$
\cQ(m,r,\mathbb{B}):=
\{c_{/\mathbb{B}} : c\in\cR\cM(m,r)\}  .
$$
\end{definition}

\begin{lemma}\label{lm:tbs}
Let $s\le r\le m$ be integers, and let $\mathbb{B}$ be an $s$-dimensional subspace of $\mathbb{E}:=\mathbb{F}_2^m$. The code $\cQ(m,r,\mathbb{B})$ is the Reed-Muller code $\cR\cM(m-s,r-s)$.
\end{lemma}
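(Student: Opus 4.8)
The plan is to prove both inclusions, exploiting that the projection map $y\mapsto y_{/\mathbb{B}}$ is $\mathbb{F}_2$-linear, so that $\cQ(m,r,\mathbb{B})$ is automatically a linear code and it suffices to track what the projection does to the generating monomials $\mathbi{v}_m(A)$. First I would reduce to the case $s=1$. Writing $\mathbb{B}=\mathbb{B}'\oplus\{0,\mathbf{b}\}$ with $\dim\mathbb{B}'=s-1$, summing over a coset of $\mathbb{B}$ can be carried out by first summing over the cosets of $\{0,\mathbf{b}\}$ and then over the cosets of the image of $\mathbb{B}'$; hence $\Proj(\cdot,\mathbb{B})$ is a composition of $s$ one-dimensional projections. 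An induction on $s$ then reduces the lemma to the statement that a single one-dimensional projection sends $\cR\cM(m,r)$ onto $\cR\cM(m-1,r-1)$.

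For the core case, fix a nonzero $\mathbf{b}$ and set $\mathbb{B}=\{0,\mathbf{b}\}$. The key tool is the discrete directional derivative: for the polynomial $f$ whose evaluation vector is $c$, put $(\Delta_\mathbf{b} f)(\mathbf{Z}):=f(\mathbf{Z})\oplus f(\mathbf{Z}+\mathbf{b})$. I would establish three facts. (i) The projection is exactly the evaluation of $\Delta_\mathbf{b} f$: for the coset $T=\mathbf{z}+\mathbb{B}$ we have $c_{/\mathbb{B}}(T)=c(\mathbf{z})\oplus c(\mathbf{z}+\mathbf{b})=(\Delta_\mathbf{b} f)(\mathbf{z})$. (ii) $\Delta_\mathbf{b} f$ is constant on each coset of $\mathbb{B}$, since $2\mathbf{b}=0$ gives $(\Delta_\mathbf{b} f)(\mathbf{z}+\mathbf{b})=f(\mathbf{z}+\mathbf{b})\oplus f(\mathbf{z})=(\Delta_\mathbf{b} f)(\mathbf{z})$; thus $\Delta_\mathbf{b} f$ descends to a well-defined function on the $(m-1)$-dimensional quotient $\mathbb{E}/\mathbb{B}$. (iii) Differencing lowers the degree: expanding $\prod_{i\in A}(Z_i\oplus b_i)$ multilinearly, the top term $\prod_{i\in A}Z_i$ cancels against $f$, so $\deg\Delta_\mathbf{b}(\prod_{i\in A}Z_i)\le|A|-1$, and by linearity $\deg\Delta_\mathbf{b} f\le r-1$ whenever $\deg f\le r$.

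To read off an RM codeword from these facts I would work in a basis $e_1,\dots,e_{m-1},e_m$ of $\mathbb{E}$ with $e_m=\mathbf{b}$, which exists because $\mathbf{b}\neq 0$; since $\cR\cM(m,r)$ is invariant under invertible linear changes of coordinate, this is harmless. In these coordinates the cosets of $\mathbb{B}$ are indexed by $(z_1,\dots,z_{m-1})$, and by (ii)--(iii) the descended function is a polynomial of degree $\le r-1$ in $Z_1,\dots,Z_{m-1}$, so its evaluation vector lies in $\cR\cM(m-1,r-1)$, giving the inclusion $\cQ(m,r,\mathbb{B})\subseteq\cR\cM(m-1,r-1)$. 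For the reverse inclusion I would exhibit explicit preimages: for $A'\subseteq[m-1]$ with $|A'|\le r-1$, the monomial $Z_m\prod_{i\in A'}Z_i$ has degree $\le r$, and a direct computation gives $\Delta_{e_m}(Z_m\prod_{i\in A'}Z_i)=\prod_{i\in A'}Z_i$; hence every generator $\mathbi{v}_{m-1}(A')$ of $\cR\cM(m-1,r-1)$ is the projection of a generator of $\cR\cM(m,r)$, proving surjectivity and thus equality. The main obstacle is the degree-reduction step (iii) together with the clean identification of $\mathbb{E}/\mathbb{B}$ with $\mathbb{F}_2^{m-1}$; once degree reduction and descent are in hand, both inclusions follow from the monomial bookkeeping above.
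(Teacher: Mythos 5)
Your proof is correct, but it is organized differently from the one in the paper. The paper performs a single change of basis adapting the coordinates to $\mathbb{B}$ for general $s$ and then computes the coset sum of every transformed monomial $\mathbi{v}_m'(A)$ in one stroke: the sum over a coset equals $\prod_{i\in A\setminus[s]}z_i'$ when $[s]\subseteq A$ and vanishes otherwise, which identifies the image exactly and yields both inclusions simultaneously. You instead factor $\Proj(\cdot,\mathbb{B})$ into $s$ one-dimensional projections and handle the core case $s=1$ via the directional derivative $\Delta_{\mathbf{b}}f=f(\mathbf{Z})\oplus f(\mathbf{Z}+\mathbf{b})$, getting the forward inclusion from the degree-drop property of $\Delta_{\mathbf{b}}$ and the reverse inclusion from the explicit preimages $Z_m\prod_{i\in A'}Z_i$. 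Both arguments rest on the same two ingredients (invariance of RM codes under invertible affine coordinate changes, and bookkeeping of monomials), so they are cousins, but yours is more modular: the $s=1$ case is precisely what the RPA algorithm uses, and the derivative viewpoint makes the degree reduction transparent without computing which monomials survive. The paper's version buys a shorter, induction-free argument and an explicit description of the surviving monomials. Two small points to tighten in your write-up: in the descent step, justify that a multilinear polynomial of degree $\le r-1$ that is invariant under $Z_m\mapsto Z_m\oplus 1$ has no $Z_m$-dependence (e.g., write it as $g_0\oplus Z_m g_1$ and conclude $g_1=0$, or simply restrict to $Z_m=0$); and in the reduction to $s=1$, state explicitly that the image $\bar{\mathbb{B}}'$ of $\mathbb{B}'$ in $\mathbb{E}/\{0,\mathbf{b}\}$ is an $(s-1)$-dimensional subspace of an $(m-1)$-dimensional space so that the induction hypothesis applies to $\cQ(m-1,r-1,\bar{\mathbb{B}}')$.
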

This lemma is an immediate corollary of Theorem 12 in \cite[Chapter 13]{Macwilliams77}. For the sake of completeness, we give a proof of this lemma in Appendix~\ref{ap:tbs}.

Note that Reed's algorithm \cite{Reed54} relies on the special case of $s=r$ in Lemma~\ref{lm:tbs}, and our new decoding algorithm makes use of the case $s=1$ in Lemma~\ref{lm:tbs} (in addition to using all subspaces and adding an iterative process).
The \texttt{RPA\_RM} decoding function is already presented in the previous section. Here we fill in the only missing component, namely the \texttt{Aggregation} function; see Algorithm~\ref{alg:BSCAg} below.
Both $y_{/\mathbb{B}_i}=(y_{/\mathbb{B}_i}(T),T\in\mathbb{E}/\mathbb{B})$ and 
$\hat{y}_{/\mathbb{B}_i}=(\hat{y}_{/\mathbb{B}_i}(T),T\in\mathbb{E}/\mathbb{B})$ are indexed by the cosets $T\in\mathbb{E}/\mathbb{B}$, and we use $[\mathbi{z}+\mathbb{B}]$ to denote the coset containing $\mathbi{z}$ (see line 3).

\begin{algorithm}
\caption{The \texttt{Aggregation} function for BSC}    \label{alg:BSCAg}
\textbf{Input:} $y,\hat{y}_{/\mathbb{B}_1},\hat{y}_{/\mathbb{B}_2}\dots,\hat{y}_{/\mathbb{B}_{n-1}}$

\textbf{Output:} $y$

\vspace*{0.05in}
\begin{algorithmic}[1]
\State Initialize $(\texttt{changevote}(\mathbi{z}), \mathbi{z}\in \{0,1\}^m)$ as an all-zero vector indexed by $\mathbi{z}\in \{0,1\}^m$
\State $n \gets 2^m$
\State $\texttt{changevote}(\mathbi{z}) \gets \sum_{i=1}^{n-1} \mathbbm{1}[y_{/\mathbb{B}_i}([\mathbi{z}+\mathbb{B}_i]) \neq \hat{y}_{/\mathbb{B}_i} ([\mathbi{z}+\mathbb{B}_i])]$ for each $\mathbi{z}\in \{0,1\}^m$
\State $y(\mathbi{z}) \gets y(\mathbi{z}) \oplus \mathbbm{1}[\texttt{changevote}(\mathbi{z})>\frac{n-1}{2}]$ 
for each $\mathbi{z}\in \{0,1\}^m$
\Comment{Here addition is over $\mathbb{F}_2$}
\State \textbf{return} $y$
\end{algorithmic}
\end{algorithm}

From line 3, we can see that the maximal possible value of $\texttt{changevote}(\mathbi{z})$ for each $\mathbi{z}\in\mathbb{E}$ is $n-1$.
Therefore the condition $\texttt{changevote}(\mathbi{z})>\frac{n-1}{2}$ on line 4 can indeed be viewed as a majority vote. As discussed in Section \ref{sect:spec}, this algorithm can be viewed as one step of the power iteration method to find the eigenvector of a matrix built from the quotient code decoding. 

In Algorithms~\ref{alg:highlvl}--\ref{alg:BSCAg}, we write the pseudo codes in a mathematical fashion for the ease of understanding. In Appendix~\ref{ap:newd}, we present another version of the \texttt{RPA\_RM} function in a program language fashion. 

\begin{proposition}\label{prop:cml}
The complexity of Algorithm~\ref{alg:highlvl} is $O(n^r\log n)$ in sequential implementation and $O(n^2)$ in parallel implementation with $O(n^r)$ processors.
\end{proposition}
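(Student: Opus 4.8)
The plan is to set up a running-time recursion that mirrors the recursive structure of \texttt{RPA\_RM} and then solve it. Write $n=2^m$ and let $T(m,r)$ denote the cost of decoding $\cR\cM(m,r)$. A single pass of the outer loop of Algorithm~\ref{alg:highlvl} does three things. First, the projection step computes $y_{/\mathbb{B}_i}$ for each of the $n-1$ one-dimensional subspaces; by \eqref{eq:qb} each projection sums two coordinates in each of the $n/2$ cosets, costing $O(n)$, so the whole step costs $(n-1)\cdot O(n)=O(n^2)$. Second, the recursive step makes $n-1$ calls on $\cR\cM(m-1,r-1)$, contributing $(n-1)\,T(m-1,r-1)$. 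Third, the aggregation step (Algorithm~\ref{alg:BSCAg}) computes, for each of the $n$ coordinates $\mathbi{z}$, a majority vote over the $n-1$ indicator terms on line~3, costing $O(n)$ per coordinate and $O(n^2)$ in total. Treating the number of outer iterations as $O(1)$ (a constant number of passes suffices in practice, and in the worst case this only inflates the bound by a logarithmic factor), I obtain
\begin{equation*}
T(m,r)=(n-1)\,T(m-1,r-1)+O(n^2),
\end{equation*}
with the base case $r=1$ handled by the Fast Hadamard Transform at cost $T(m,1)=O(n\log n)$.

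I would then solve this by unrolling it to the base case. The recursion has depth $r-1$, and the number of subproblems multiplies by about $n$ per level while the block length halves, so the number of order-$1$ leaves is $\prod_{j=0}^{r-2}(2^{m-j}-1)=\Theta(2^{(r-1)m})=\Theta(n^{r-1})$, each of block length $2^{m-r+1}=n/2^{r-1}$. The total base-case cost is therefore
\begin{equation*}
\Theta(n^{r-1})\cdot O\!\left(\frac{n}{2^{r-1}}\,\log\frac{n}{2^{r-1}}\right)=O(n^{r}\log n),
\end{equation*}
which is the origin of the $\log n$ factor. It remains to verify that the projection and aggregation terms do not dominate: at depth $k$ there are $\Theta(2^{km-\binom{k}{2}})$ nodes, each costing $O\big((2^{m-k})^2\big)$, for a per-level total of $O(2^{(k+2)m})$ up to a constant depending only on $r$; this is maximized at $k=r-2$, giving $O(n^{r})$, which is absorbed into the base-case term. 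Summing over the $r-1$ levels leaves the sequential bound at $O(n^r\log n)$.

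For the parallel statement I would exploit that, within each node, the $n-1$ projections, the $n-1$ recursive calls, and the coordinate-wise votes are mutually independent and can run concurrently. Assigning one processor to each pair (leaf subproblem, coordinate of that subproblem) uses $\Theta(n^{r-1})\cdot O(n/2^{r-1})=O(n^r)$ processors, enough to expand the whole recursion tree at once. With all projections and recursive decodings done in parallel, the running time is governed by the projection and aggregation work at the top level, which consists of $O(n^2)$ elementary operations; performing these with only the branch-level parallelism (or distributing them across the processors with the crude per-coordinate serialization of the $n-1$-term majority vote) yields the stated parallel time $O(n^2)$.

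The hard part will be the bookkeeping in the two places where the counting is delicate. First, I must confirm that the geometric growth in the number of subproblems is \emph{exactly} offset by the geometric shrinkage in block length, so that the FHT base case dominates and contributes a single $\log n$ rather than a power of $\log n$; this is what forces the dominant level to be $k=r-2$ and the leaf level, and not the intermediate ones. Second, I must pin down the role of $N_{\max}$: since it is set to $\lceil m/2\rceil$, a careful reading would multiply it in at every recursion level, so the clean $O(n^r\log n)$ bound relies on treating the iteration count as a constant (justified by fast convergence), and I would state explicitly that the conservative alternative merely replaces $\log n$ by a fixed power of $\log n$ without affecting the leading $n^r$.
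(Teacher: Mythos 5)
Your proposal is correct and follows essentially the same route as the paper: a recursion on the order $r$ with the Fast Hadamard Transform as the $r=1$ base case, with the $n-1$ recursive calls on $\cR\cM(m-1,r-1)$ dominating. The paper's own proof is a terser induction that simply asserts the recursive step dominates; your version additionally verifies that the projection/aggregation work at intermediate levels is absorbed and flags the $N_{\max}$ bookkeeping, both of which the paper glosses over.
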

In Section~\ref{sect:PandA}, we further discuss options to reduce the computation time by using fewer subspaces in the projection step. 
\begin{proof}
We prove by the induction on the order of the RM code. To establish the base case, observe that the complexity of decoding first-order RM codes using Fast Hadamard Transform (FHT) \cite{Green66,Macwilliams77} is $O(n\log n)$. 
Now we assume the proposition holds for decoding $(r-1)$-th order RM codes and prove the inductive step.
Clearly, the complexity of Algorithm~\ref{alg:highlvl} is determined by the complexity of the recursive decoding step on line 3. By induction hypothesis, the complexity of decoding each $y_{/\mathbb{B}_i}$ is $O(n^{r-1}\log n)$. Since there are $n-1$ one-dimensional subspaces $\mathbb{B}_1,\mathbb{B}_2,\dots,\mathbb{B}_{n-1}$, the complexity of Algorithm~\ref{alg:highlvl} is indeed $O(n^r\log n)$.
\end{proof}

In the next proposition, we show that whether Algorithm~\ref{alg:highlvl} outputs the correct codeword or not is independent of the transmitted codeword and only depends on the error pattern imposed by the BSC channel.
\begin{proposition} \label{prop:ag1in}
Let $c\in\cR\cM(m,r)$ be a codeword of the RM code. Let $e=(e(\mathbi{z}), \mathbi{z}\in\mathbb{E})$ be the error vector imposed on $c$ by the BSC channel, and the output vector of the BSC channel is $y=c+e$. Denote the decoding result as $\hat{c}=\texttt{RPA\_RM}(y,m,r,N_{\max})$. Then the indicator function of decoding error $\mathbbm{1}[\hat{c}\neq c]$ is independent of the choice of $c$ and only depends on the error vector $e$.
\end{proposition}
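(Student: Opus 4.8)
The plan is to prove the stronger \emph{covariance} property that, for every $c\in\cR\cM(m,r)$ and every binary vector $y=(y(\mathbi{z}),\mathbi{z}\in\mathbb{E})$,
\[
\texttt{RPA\_RM}(c\oplus y,m,r,N_{\max}) \;=\; c\oplus\texttt{RPA\_RM}(y,m,r,N_{\max}).
\]
Granting this, the proposition is immediate: since the channel output is $y=c\oplus e$, we get $\hat{c}=\texttt{RPA\_RM}(c\oplus e,m,r,N_{\max})=c\oplus\texttt{RPA\_RM}(e,m,r,N_{\max})$, and therefore $\mathbbm{1}[\hat{c}\neq c]=\mathbbm{1}[\texttt{RPA\_RM}(e,m,r,N_{\max})\neq 0]$, which manifestly depends on $e$ alone (for fixed $m,r,N_{\max}$).

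I would establish covariance by induction on the order $r$. The base case $r=1$ is the FHT decoder of $\cR\cM(m,1)$, which is a minimum-distance (equivalently maximum-likelihood over the BSC) decoder; any such decoder is covariant for a linear code, because the Hamming distance satisfies $d(c\oplus y,c')=d(y,c'\oplus c)$ and $c'\mapsto c'\oplus c$ permutes the code, so the minimizer shifts by $c$ (one only needs the tie-breaking to respect this shift, which the Hadamard-transform argmax does).

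For the inductive step, fix $c\in\cR\cM(m,r)$ and run the algorithm in parallel on the inputs $y$ and $c\oplus y$, comparing them pass by pass. The goal is to show that a single execution of lines 2--5 sends $c\oplus y$ to $c\oplus\hat{y}$ whenever it sends $y$ to $\hat{y}$. Three facts drive this. First, projection is $\Ftwo$-linear, so $(c\oplus y)_{/\mathbb{B}_i}=c_{/\mathbb{B}_i}\oplus y_{/\mathbb{B}_i}$. Second, by Lemma~\ref{lm:tbs} applied with $s=1$ we have $c_{/\mathbb{B}_i}\in\cR\cM(m-1,r-1)$, so the induction hypothesis yields $\texttt{RPA\_RM}((c\oplus y)_{/\mathbb{B}_i},m-1,r-1,N_{\max})=c_{/\mathbb{B}_i}\oplus\hat{y}_{/\mathbb{B}_i}$. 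Third, the summand on line 3 is a disagreement bit, $\mathbbm{1}[a\neq b]=a\oplus b$ for $a,b\in\{0,1\}$; replacing $y_{/\mathbb{B}_i}$ by $c_{/\mathbb{B}_i}\oplus y_{/\mathbb{B}_i}$ and $\hat{y}_{/\mathbb{B}_i}$ by $c_{/\mathbb{B}_i}\oplus\hat{y}_{/\mathbb{B}_i}$ leaves each summand unchanged, since the two copies of $c_{/\mathbb{B}_i}([\mathbi{z}+\mathbb{B}_i])$ cancel. Hence \texttt{changevote}, and with it the flip mask $\mathbbm{1}[\texttt{changevote}(\mathbi{z})>\tfrac{n-1}{2}]$ on line 4, is \emph{identical} in the two runs, so the only difference between the outputs is the additive $c$ carried in through the $y(\mathbi{z})$ term on line 4. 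This proves the single-pass claim.

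Finally I would lift this through the outer loop: writing $y^{(t)}$ for the state of the $y$-run after $t$ passes, the single-pass claim and an induction on $t$ show the corresponding state of the other run is exactly $c\oplus y^{(t)}$. The convergence test on line 6 then reads $c\oplus y^{(t)}=c\oplus y^{(t+1)}$ in one run and $y^{(t)}=y^{(t+1)}$ in the other, which are the same condition, so both runs execute the same number of iterations and halt together, giving covariance for the full algorithm. The step I expect to require the most care is verifying that the \emph{nonlinear} majority-vote aggregation commutes with adding a codeword; the cancellation in the third fact above is precisely what makes an otherwise nonlinear decoder behave linearly under codeword shifts, and the delicate bookkeeping is that it is the projected codeword $c_{/\mathbb{B}_i}$ (not $c$ itself) that cancels inside each coset coordinate, which is exactly what Lemma~\ref{lm:tbs} guarantees is a legitimate lower-order codeword.
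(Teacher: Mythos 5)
Your proposal is correct and follows essentially the same route as the paper: the paper proves Proposition~\ref{prop:ag1in} by pointing to the proof of Proposition~\ref{prop:eqde}, whose key step (Lemma~\ref{lm:ll}) is exactly your covariance property in the LLR domain, established by the same induction on $r$ with the FHT/ML base case, linearity of the projection, Lemma~\ref{lm:tbs} to shift by $c_{/\mathbb{B}_i}$, and cancellation inside the aggregation. Your binary-domain version with the disagreement-indicator cancellation is the faithful BSC specialization of that argument, so no gap.
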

Notice that we use maximal likelihood decoder for first-order RM code, and the proposition can be proved by induction on the order of the RM code\footnote{See the proof of Proposition~\ref{prop:eqde} for a rigorous argument. The ideas of the proofs of these two propositions are exactly the same.}.
This proposition is useful for simulations because we can simply transmit the all-zero codeword over the BSC channel to measure the decoding error probability.

\subsection{Spectral interpretations of Algorithm~\ref{alg:BSCAg}}   \label{sect:spec}}

Algorithm~\ref{alg:BSCAg} can be viewed as a one-step power iteration of a spectral algorithm. More precisely, observe that $\hat{y}_{/\mathbb{B}_1},\hat{y}_{/\mathbb{B}_2}\dots,\hat{y}_{/\mathbb{B}_{n-1}}$ contain the estimates of $c(\mathbi{z}) \oplus c(\mathbi{z}')$ for all $\mathbi{z} \neq \mathbi{z}'$, where $c=(c(\mathbi{z}), \mathbi{z}\in\mathbb{E})$ is the transmitted (true) codeword. 
\{We denote the estimate of $c(\mathbi{z}) \oplus c(\mathbi{z}')$
as $\hat{y}_{\mathbi{z},\mathbi{z}'}$.
Suppose for the moment that we want to find a vector $\hat{y}=(\hat{y}(\mathbi{z}), \mathbi{z}\in\mathbb{E})\in\{0,1\}^n$ to agree with as many estimates of these sums as possible, i.e., we want to find a vector $\hat{y}$ to maximize
$$
\left|\{(\mathbi{z},\mathbi{z}'):\mathbi{z}\neq\mathbi{z}',
\hat{y}(\mathbi{z})\oplus \hat{y}(\mathbi{z}')=\hat{y}_{\mathbi{z},\mathbi{z}'}\} \right|.
$$
Notice that
$$
\left|\{(\mathbi{z},\mathbi{z}'):\mathbi{z}\neq\mathbi{z}',
\hat{y}(\mathbi{z})\oplus \hat{y}(\mathbi{z}')=\hat{y}_{\mathbi{z},\mathbi{z}'}\} \right| +
\left|\{(\mathbi{z},\mathbi{z}'):\mathbi{z}\neq\mathbi{z}',
\hat{y}(\mathbi{z})\oplus \hat{y}(\mathbi{z}')\neq\hat{y}_{\mathbi{z},\mathbi{z}'}\} \right| =
n(n-1).
$$
Therefore,
$$
\sum_{\mathbi{z}\neq\mathbi{z}'}
(-1)^{\hat{y}(\mathbi{z})+ \hat{y}(\mathbi{z}')+\hat{y}_{\mathbi{z},\mathbi{z}'}}
= 2 \left|\{(\mathbi{z},\mathbi{z}'):\mathbi{z}\neq\mathbi{z}',
\hat{y}(\mathbi{z})\oplus \hat{y}(\mathbi{z}')=\hat{y}_{\mathbi{z},\mathbi{z}'}\} \right| -n(n-1) .
$$
Thus our task is equivalent to find 
\begin{equation} \label{eq:zgz}
\argmax_{\hat{y}\in\{0,1\}^n}
\sum_{\mathbi{z}\neq\mathbi{z}'}
(-1)^{\hat{y}(\mathbi{z})+ \hat{y}(\mathbi{z}')+\hat{y}_{\mathbi{z},\mathbi{z}'}} .
\end{equation}
Given a vector $\hat{y}\in\{0,1\}^n$, we define another vector $\hat{u}\in\{-1,1\}^n$ by setting 
$\hat{u}(\mathbi{z}):=(-1)^{\hat{y}(\mathbi{z})}$ for all $\mathbi{z}\in\mathbb{E}$. In order to find the maximizing vector $\hat{y}$ in \eqref{eq:zgz}, it suffices to find
\begin{equation}\label{eq:fuyu}
\argmax_{\hat{u}\in\{-1,1\}^n}
\sum_{\mathbi{z}\neq\mathbi{z}'}
(-1)^{\hat{y}_{\mathbi{z},\mathbi{z}'}} 
\hat{u}(\mathbi{z}) \hat{u}(\mathbi{z}').
\end{equation}
Now we build an $n\times n$ matrix $A$ from  $\{\hat{y}_{\mathbi{z},\mathbi{z}'}:\mathbi{z},\mathbi{z}'\in\bE,\mathbi{z}\neq\mathbi{z}'\}$ as follows: The rows and columns of $A$ are indexed by $\mathbi{z}\in\mathbb{E}$, and we set the entry 
$$
A_{\mathbi{z},\mathbi{z}'}:=
\left\{
\begin{array}{cc}
(-1)^{\hat{y}_{\mathbi{z},\mathbi{z}'}} &
\text{~if~} \mathbi{z}\neq\mathbi{z}'  \\
0 & \text{~if~} \mathbi{z}=\mathbi{z}'
\end{array}
\right. ,
$$
i.e., for $\mathbi{z}\neq\mathbi{z}'$ we set $A_{\mathbi{z},\mathbi{z}'}=1$ if $\hat{y}_{\mathbi{z},\mathbi{z}'}=0$, and $A_{\mathbi{z},\mathbi{z}'}=-1$ if $\hat{y}_{\mathbi{z},\mathbi{z}'}=1$.
Under this definition, the optimization problem \eqref{eq:fuyu} becomes
\begin{equation} \label{sbchou}
\argmax_{\hat{u}\in\{-1,1\}^n}
\sum_{\mathbi{z}\neq\mathbi{z}'}
A_{\mathbi{z},\mathbi{z}'} 
\hat{u}(\mathbi{z}) \hat{u}(\mathbi{z}')
= \argmax_{\hat{u}\in\{1,-1\}^n} \hat{u}^T A \hat{u}.
\end{equation}
It is well known that this combinatorial optimization problem is NP-hard.
In practice, people usually use 
 the following spectral relaxation to obtain approximate solution:
$$
\argmax_{\hat{u}\in\mathbb{R}^n, \|\hat{u}\|^2=n} \hat{u}^T A \hat{u}.
$$
It is well known that the solution to this relaxed optimization problem is the eigenvector corresponding to the largest eigenvalue of $A$.
One way to find this eigenvector is to use the power iteration method: pick some vector $v$ (e.g., at random), then $A^t v$ converges to this eigenvector when $t$ is large enough.\footnote{Assume the largest eigenvalue has largest magnitude.} 
After rescaling $A^t v$ to make $\|A^t v\|^2=n$, we obtain the maximizing vector $\tilde{u}=A^t v$ in the relaxed optimization problem. In order to obtain the solution to the original optimization problem in \eqref{sbchou}, we only need to look at the sign of each coordinate of $\tilde{u}$: If $\tilde{u}(\mathbi{z})>0$, then we set $\hat{u}(\mathbi{z})=1$, and if $\tilde{u}(\mathbi{z})<0$, then we set $\hat{u}(\mathbi{z})=-1$. In this way, we obtain the vector $\hat{u}$ that serves as our approximate solution to \eqref{sbchou}.
To summarize, our approximate solution to \eqref{sbchou} is $\hat{u}=\sign(A^t v)$, where $v$ is some random vector and $t$ is some large enough integer.

{Let us denote the output vector of Algorithm~\ref{alg:BSCAg} as $\overline{y}$, and we define another vector $\overline{u}$ as $\overline{u}(\mathbi{z})=(-1)^{\overline{y}(\mathbi{z})}$ for all $\mathbi{z}\in\mathbb{E}$.
For the original received vector $y$, we also define a vector $u$ as $u(\mathbi{z})=(-1)^{y(\mathbi{z})}$ for all $\mathbi{z}\in\mathbb{E}$.
The main observation in this subsection is that 
\begin{equation}\label{eq:brless}
\overline{u}=\sign(A u),
\end{equation}
i.e., the output of Algorithm~\ref{alg:BSCAg} is in fact the same as a one-step power iteration of the spectral algorithm with the original received vector $u$ playing the role of vector $v$ above.
It is also easy to see why \eqref{eq:brless} holds: According to \eqref{eq:brless}, $\overline{u}(\mathbi{z})=1$ if
$\sum_{\mathbi{z}'\neq \mathbi{z}}
(-1)^{\hat{y}_{\mathbi{z},\mathbi{z}'}\oplus y(\mathbi{z}')}>0$ and $\overline{u}(\mathbi{z})=-1$ otherwise. This is equivalent to saying that $\overline{y}(\mathbi{z})=0$ if
$|\{\mathbi{z}':\mathbi{z}'\neq \mathbi{z}, \hat{y}_{\mathbi{z},\mathbi{z}'}\oplus y(\mathbi{z}')=0\}|>\frac{n-1}{2}$ and $\overline{u}(\mathbi{z})=1$ otherwise. Clearly, the vector $\overline{y}$ given by this rule is exactly the same as the output vector of Algorithm~\ref{alg:BSCAg}.
}

We tried to use the power-iteration method in the \texttt{Aggregation} function for more than one step. However, the performance does not improve over the current version of  \texttt{Aggregation} function based on majority vote. This is because in the spectral method above we tried our best to agree with $\hat{y}_{/\mathbb{B}_1},\hat{y}_{/\mathbb{B}_2}$,$\dots$, $\hat{y}_{/\mathbb{B}_{n-1}}$, ignoring the original channel output $y$, and many of these are very noisy measurements.



\vspace*{0.2in}
\section{Decoding algorithm for general binary-input memoryless channels}  \label{sect:gen}
The decoding algorithm in the previous section only works for the BSC. In this section, we will present a natural extension of Algorithm~\ref{alg:highlvl} that works for any binary-input memoryless channels, and this new algorithm is based on LLRs (see \eqref{eq:defllr}).
Similarly to Algorithm~\ref{alg:highlvl}, this new algorithm is also defined recursively, i.e., we first assume that we know how to decode $(r-1)$-th order Reed-Muller code, and then we use it to decode the $r$-th order Reed-Muller code.
To begin with, note that the soft-decision FHT decoder \cite{Be86} allows us to decode the first order RM code efficiently for general binary-input channels. The soft-decision FHT decoder is based on LLR, and the complexity is also $O(n\log n)$, the same as the hard-decision FHT decoder.

For completeness, we recap the FHT decoder in \cite{Be86} for first order RM codes.
We still use $c=(c(\mathbi{z}), \mathbi{z}\in\mathbb{E})$ to denote the transmitted (true) codeword and $y=(y(\mathbi{z}), \mathbi{z}\in\mathbb{E})$ to denote the corresponding channel output. 
Given the output vector $y$, the ML decoder for first order RM codes aims to find $c\in\cR\cM(m,1)$ to maximize
$
\prod_{\mathbi{z}\in\mathbb{E}} W(y(\mathbi{z})| c(\mathbi{z})).
$
This is equivalent to maximizing the following quantity:
$$
\prod_{\mathbi{z}\in\mathbb{E}} \frac{W(y(\mathbi{z})| c(\mathbi{z}))}
{\sqrt{W(y(\mathbi{z})|0) W(y(\mathbi{z})|1)}},
$$
which is further equivalent to maximizing
\begin{equation}\label{eq:llr}
\sum_{\mathbi{z}\in\mathbb{E}} \ln \Big(
\frac{W(y(\mathbi{z})| c(\mathbi{z}))}
{\sqrt{W(y(\mathbi{z})|0) W(y(\mathbi{z})|1)}} \Big).
\end{equation}
Notice that the codeword $c$ is a binary vector. Therefore,
$$
\ln \Big( \frac{W(y(\mathbi{z})| c(\mathbi{z}))}
{\sqrt{W(y(\mathbi{z})|0) W(y(\mathbi{z})|1)}} \Big)
=\left\{ \begin{array}{cc}
\frac{1}{2}\LLR(y(\mathbi{z})) & \mbox{if~} c(\mathbi{z})=0 \vspace*{0.05in} \\ 
-\frac{1}{2}\LLR(y(\mathbi{z})) & \mbox{if~} c(\mathbi{z})=1
\end{array} \right.  .
$$
From now on we will use the shorthand notation
$$
L(\mathbi{z}):= \LLR(y(\mathbi{z})),
$$
and the formula in \eqref{eq:llr} can be written as
\begin{equation} \label{eq:mxr}
\frac{1}{2} \sum_{\mathbi{z}\in\mathbb{E}} \Big( (-1)^{c(\mathbi{z})} L(\mathbi{z}) \Big),
\end{equation}
so we want to find $c\in\cR\cM(m,1)$ to maximize this quantity.

By definition, every $c\in\cR\cM(m,1)$ corresponds to a polynomial in $\mathbb{F}_2[Z_1,Z_2,\dots,Z_m]$ of degree one, so we can write every codeword $c$ as a polynomial $u_0+\sum_{i=1}^m u_i Z_i$.
In this way, we have $c(\mathbi{z})=u_0+\sum_{i=1}^m u_i z_i$, where $z_1,z_2,\dots,z_m$ are the coordinates of the vector $\mathbi{z}$.
Now our task is to find $u_0,u_1,u_2,\dots,u_m\in \mathbb{F}_2$ to maximize
\begin{equation} \label{eq:fht}
\sum_{\mathbi{z}\in\mathbb{E}} \Big( (-1)^{u_0+\sum_{i=1}^m u_i z_i} L(\mathbi{z}) \Big)
=(-1)^{u_0} \sum_{\mathbi{z}\in\mathbb{E}} \Big( (-1)^{\sum_{i=1}^m u_i z_i} L(\mathbi{z}) \Big).
\end{equation}
For a binary vector $\mathbi{u}=(u_1,u_2,\dots,u_m)\in\mathbb{E}$, we define
$$
\hat{L}(\mathbi{u}):= \sum_{\mathbi{z}\in\mathbb{E}} \Big( (-1)^{\sum_{i=1}^m u_i z_i} L(\mathbi{z}) \Big).
$$
Clearly, to find the maximizer of \eqref{eq:fht}, we only need to calculate $\hat{L}(\mathbi{u})$ for all $\mathbi{u}\in\mathbb{E}$, but the vector $(\hat{L}(\mathbi{u}),\mathbi{u}\in\mathbb{E})$ is exactly the Hadamard Transform of the vector $(L(\mathbi{z}),\mathbi{z}\in\mathbb{E})$, so it can be calculated using the Fast Hadamard Transform with complexity $O(n\log n)$.
Once we know the values of $(\hat{L}(\mathbi{u}),\mathbi{u}\in\mathbb{E})$, we can find $\mathbi{u}^*=(u_1^*,u_2^*,\dots,u_m^*)\in\mathbb{E}$ that maximizes $|\hat{L}(\mathbi{u})|$.
If $\hat{L}(\mathbi{u}^*)>0$, then the decoder outputs the codeword corresponding to $u_0^*=0,u_1^*,u_2^*,\dots,u_m^*$. Otherwise, the decoder outputs the codeword corresponding to $u_0^*=1,u_1^*,u_2^*,\dots,u_m^*$.
This completes the description of how to decode the first order RM codes for general channels.

The next problem is how to extend \eqref{eq:qb} in the general setting.
The purpose of \eqref{eq:qb} is mapping two output symbols $(y(\mathbi{z}), \mathbi{z}\in T)$ whose indices are in the same coset $T\in\mathbb{E}/\mathbb{B}$ to one symbol. In this way, we reduce the $r$-th order RM code to an $(r-1)$-th order RM code.
For BSC, this mapping is simply the addition in $\mathbb{F}_2$.
The sum $y_{/\mathbb{B}}(T)$ can be interpreted as an estimate of 
$c_{/\mathbb{B}}(T)$, where $c$ is the transmitted (true) codeword. 
In other words, 
$$
\bP \big( Y_{/\mathbb{B}}(T) = c_{/\mathbb{B}}(T) \big) 
> \bP \big( Y_{/\mathbb{B}}(T) = c_{/\mathbb{B}}(T) \oplus 1 \big),
$$
where $Y$ is the channel output random vector.

For general channels, we also want to estimate $c_{/\mathbb{B}}(T)$ based on the LLRs $(L(\mathbi{z}), \mathbi{z}\in T)$. More precisely, given $(y(\mathbi{z}), \mathbi{z}\in T)$, or equivalently given $(L(\mathbi{z}), \mathbi{z}\in T)$, we would like to calculate the following LLR:
$$
L_{/\mathbb{B}}(T):=
\ln \Big( \frac{\bP\big( Y(\mathbi{z})= y(\mathbi{z}), \mathbi{z}\in T \big| c_{/\mathbb{B}}(T)=0\big)}{\bP\big( Y(\mathbi{z})= y(\mathbi{z}), \mathbi{z}\in T \big| c_{/\mathbb{B}}(T)=1\big)} \Big).
$$

We will make use of the following simple property of RM codes to calculate this LLR.
\begin{lemma}  \label{lm:soez}
Suppose that $r\ge 1$. Let $C$ be a random codeword chosen uniformly from $\cR\cM(m,r)$, and let $\mathbi{z}$ and $\mathbi{z}'$ be two distinct vectors in $\bE$. Then the two coordinates 
$(C(\mathbi{z}),C(\mathbi{z}'))$ of the random codeword $C$ have i.i.d. Bernoulli-$1/2$ distribution.
\end{lemma}
\begin{proof}
Define the following four sets
\begin{align*}
\cA(0,0):=\{c\in\cR\cM(m,r):c(\mathbi{z})=c(\mathbi{z}')=0\}, \quad
\cA(0,1):=\{c\in\cR\cM(m,r):c(\mathbi{z})=0,c(\mathbi{z}')=1\},  \\
\cA(1,0):=\{c\in\cR\cM(m,r):c(\mathbi{z})=1, c(\mathbi{z}')=0\}, \quad
\cA(1,1):=\{c\in\cR\cM(m,r):c(\mathbi{z})=c(\mathbi{z}')=1\}.
\end{align*}
To prove this lemma, we only need to show that $|\cA(0,0)|=|\cA(0,1)|=|\cA(1,0)|=|\cA(1,1)|$.
Since RM code is linear and the all one vector is a codeword of RM codes, the marginal distribution of the coordinate $C(\mathbi{z})$ is Bernoulli-$1/2$ for every $\mathbi{z}\in\bE$. Thus we have
\begin{equation} \label{eq:ouu}
|\cA(0,0)|+|\cA(0,1)|
=|\cA(1,0)|+|\cA(1,1)|, \quad\quad
|\cA(0,0)|+|\cA(1,0)|
=|\cA(0,1)|+|\cA(1,1)|.
\end{equation}
Now take $\mathbi{z}=(z_1,\dots,z_m)$ and $\mathbi{z}'=(z_1',\dots,z_m')$ such that $\mathbi{z}\neq \mathbi{z}'$. Then there exists $i\in[m]$ such that $z_i\neq z_i'$.
Since we assume that $r\ge 1$, $\cR\cM(m,r)$ contains the evaluation vector of the degree-$1$ monomial $Z_i$. We denote this evaluation vector as $v$, and we know that $v(\mathbi{z})\neq v(\mathbi{z}')$. Without loss of generality, assume that $v(\mathbi{z})=0$ and $v(\mathbi{z}')=1$.
Then we have\footnote{For a set $\cA$ and a vector $v$, we define the set $\cA+v:=\{a+v:a\in\cA\}$.} $\cA(0,0)+v \subseteq \cA(0,1)$, so $|\cA(0,0)|\le |\cA(0,1)|$. Conversely, we also have $\cA(0,1)+v \subseteq \cA(0,0)$, so $|\cA(0,1)|\le |\cA(0,0)|$. Therefore, $|\cA(0,1)|= |\cA(0,0)|$. Similarly, we can also show that $|\cA(1,1)|= |\cA(1,0)|$. Taking these into \eqref{eq:ouu}, we obtain that $|\cA(0,0)|=|\cA(0,1)|=|\cA(1,0)|=|\cA(1,1)|$, which completes the proof of the lemma.
\end{proof}

Now we can calculate $L_{/\mathbb{B}}(T)$ using the following model: Suppose that $S_1$ and $S_2$ are i.i.d. Bernoulli-$1/2$ random variables, and we transmit them over two independent copies of the channel $W:\{0,1\}\to\cW$. The corresponding channel output random variables are denoted as $X_1$ and $X_2$, respectively.
Then for $x_1,x_2\in\cW$,
\begin{align*}
& \ln \Big( \frac{\bP(X_1=x_1, X_2=x_2 | S_1+S_2=0 )}{\bP(X_1=x_1,X_2=x_2 | S_1+S_2=1 )} \Big)
=  \ln \Big( \frac{\bP(X_1=x_1, X_2=x_2 , S_1+S_2=0 )}{\bP(X_1=x_1, X_2=x_2 , S_1+S_2=1 )} \Big) \\
= & \ln \Big( \frac{\bP(X_1=x_1, X_2=x_2 , S_1=0,S_2=0 )+\bP(X_1=x_1, X_2=x_2 , S_1=1,S_2=1 )}
{\bP(X_1=x_1, X_2=x_2 , S_1=0,S_2=1 ) + \bP(X_1=x_1, X_2=x_2 , S_1=1,S_2=0 )} \Big)   \\
= & \ln \Big( \frac{ \frac{1}{4}W(x_1|0)W(x_2|0)+\frac{1}{4}W(x_1|1)W(x_2|1) }
{\frac{1}{4}W(x_1|0)W(x_2|1) + \frac{1}{4}W(x_1|1)W(x_2|0)} \Big)  
=  \ln \Big( \frac{ \frac{W(x_1|0)W(x_2|0)}{W(x_1|1)W(x_2|1)}+1 }
{\frac{W(x_1|0)}{W(x_1|1)} + \frac{W(x_2|0)}{W(x_2|1)}} \Big)  \\
= & \ln \Big( \exp \big( \LLR(x_1)+\LLR(x_2) \big) +1 \Big) - \ln \Big( \exp(\LLR(x_1))+\exp(\LLR(x_2)) \Big).
\end{align*}

Lemma~\ref{lm:soez} above allows us to replace $x_1,x_2$ with $(y(\mathbi{z}), \mathbi{z}\in T)$, and we obtain that
\begin{equation}  \label{eq:lt}
L_{/\mathbb{B}}(T)=\ln \Big( \exp \big( \sum_{\mathbi{z}\in T} L(\mathbi{z}) \big) +1 \Big) - 
\ln \Big( \sum_{\mathbi{z}\in T} \exp(L(\mathbi{z})) \Big).
\end{equation}

Now we are ready to present the decoding algorithm for general binary-input channels. 
In Algorithms~\ref{alg:genhighlvl}--\ref{alg:genAg} below, we still denote the decoding result of the $(r-1)$-th order RM code as $\hat{y}_{/\mathbb{B}}$ (see line 7 of Algorithm~\ref{alg:genhighlvl}), where 
$\hat{y}_{/\mathbb{B}}=(\hat{y}_{/\mathbb{B}}(T),T\in\mathbb{E}/\mathbb{B})$ are indexed by the cosets $T\in\mathbb{E}/\mathbb{B}$, and we use $[\mathbi{z}+\mathbb{B}]$ to denote the coset containing $\mathbi{z}$ (see line 3 of Algorithm~\ref{alg:genAg}).

Algorithm~\ref{alg:genhighlvl} is very similar to Algorithm~\ref{alg:highlvl}:
From line 8 to line 10, we compare $\hat{L}(\mathbi{z})$ with the original $L(\mathbi{z})$. If the relative difference between these two is below the threshold $\theta$ for every $\mathbi{z}\in\mathbb{E}$, then the values of $L(\mathbi{z}),\mathbi{z}\in\mathbb{E}$ change very little in this iteration, and the algorithm reaches a ``stable" state, so we can exit the for loop on line 2.
In practice, we find that $\theta=0.05$ works fairly well\footnote{The decoding error probability of this algorithm is non-increasing when we decrease the value of $\theta$, and the running time of the algorithm increases when we decrease $\theta$. Through simulations we find that the decoding error probability remains the same if we continue decreasing $\theta$ beyond $0.05$. Therefore, $\theta=0.05$ is a good choice in practice because smaller $\theta$ will only increase the running time and not decrease decoding error at all.}, and we still set the maximal number of iterations $N_{\max}=m/2$, which is the same as in Algorithm~\ref{alg:highlvl}.
On line 13, the algorithm simply produces the decoding result according to the LLR at each coordinate.

A few explanations of Algorithm~\ref{alg:genAg}: On line 3, we set
$\texttt{cumuLLR}(\mathbi{z})=\sum_{\mathbi{z}'\neq \mathbi{z}} \alpha(\mathbi{z},\mathbi{z}')L(\mathbi{z}')$, where the coefficients $\alpha(\mathbi{z},\mathbi{z}')$ can only be $1$ or $-1$.
More precisely, $\alpha(\mathbi{z},\mathbi{z}')$ is $1$ if the decoding result of the corresponding $(r-1)$th order RM code at the coset $\{\mathbi{z},\mathbi{z}'\}$ is $0$, and $\alpha(\mathbi{z},\mathbi{z}')$ is $-1$ if the decoding result at the coset $\{\mathbi{z},\mathbi{z}'\}$ is $1$.
The reason behind this assignment is simple: The decoding result at the coset $\{\mathbi{z},\mathbi{z}'\}$ is an estimate of $c(\mathbi{z})\oplus c(\mathbi{z}')$. If $c(\mathbi{z})\oplus c(\mathbi{z}')$ is more likely to be $0$, then the sign of $L(\mathbi{z})$ and $L(\mathbi{z}')$ should be the same. Here $\texttt{cumuLLR}(\mathbi{z})$ serves as an estimate of $L(\mathbi{z})$ based on all the other $L(\mathbi{z}'), \mathbi{z}'\neq \mathbi{z}$, so we assign the coefficient $\alpha(\mathbi{z},\mathbi{z}')$ to be $1$. Otherwise, if $c(\mathbi{z})\oplus c(\mathbi{z}')$ is more likely to be $1$, then the sign of $L(\mathbi{z})$ and $L(\mathbi{z}')$ should be different, so we assign the coefficient $\alpha(\mathbi{z},\mathbi{z}')$ to be $-1$.

\begin{algorithm}
\caption{The \texttt{RPA\_RM} decoding function for general binary-input memoryless channels}    \label{alg:genhighlvl}
\textbf{Input:} The LLR vector $(L(\mathbi{z}), \mathbi{z}\in \{0,1\}^m)$; the parameters of the Reed-Muller code $m$ and $r$; the maximal number of iterations $N_{\max}$; the exiting threshold $\theta$

\textbf{Output:} The decoded codeword $\hat{c}$

\vspace*{0.05in}
\begin{algorithmic}[1]
\State  $\mathbb{E} := \{0,1\}^m$
\For {$j=1,2,\dots,N_{\max}$} 
\State $L_{/\mathbb{B}_i} \gets (L_{/\mathbb{B}_i}(T),T\in\mathbb{E}/\mathbb{B}_i)$  for $i=1,2,\dots,2^m-1$                         \Comment{Projection}
\State \Comment{$L_{/\mathbb{B}_i}(T)$ is calculated from $(L(\mathbi{z}), \mathbi{z}\in\mathbb{E})$ according to \eqref{eq:lt}}     
\State $\hat{y}_{/\mathbb{B}_i} \gets \texttt{RPA\_RM}(L_{/\mathbb{B}_i},m-1,r-1,N_{\max}, \theta)$
for $i=1,2,\dots,2^m-1$ 
\Comment{Recursive decoding}
\State \Comment{If $r=2$, then we use the Fast Hadamard Transform to decode the first-order RM code}
\State $\hat{L} \gets \texttt{Aggregation}(L,\hat{y}_{/\mathbb{B}_1},\hat{y}_{/\mathbb{B}_2}\dots,\hat{y}_{/\mathbb{B}_{n-1}})$           \Comment{Aggregation}
\If {$|\hat{L}(\mathbi{z})-L(\mathbi{z})|\le \theta |L(\mathbi{z})|$ for all $\mathbi{z}\in\mathbb{E}$}
\Comment{The algorithm reaches a stable point}
\State \textbf{break}  
\EndIf
\State $L \gets \hat{L}$
\EndFor
\State $\hat{c}(\mathbi{z}) \gets \mathbbm{1}[L(\mathbi{z}) < 0]$ for each $\mathbi{z}\in\mathbb{E}$

\State \textbf{return} $\hat{c}$
\end{algorithmic}
\end{algorithm}

\begin{algorithm}
\caption{The \texttt{Aggregation} function for general binary-input memoryless channels}    \label{alg:genAg}
\textbf{Input:} $L,\hat{y}_{/\mathbb{B}_1},\hat{y}_{/\mathbb{B}_2}\dots,\hat{y}_{/\mathbb{B}_{n-1}}$

\textbf{Output:} $\hat{L}$

\vspace*{0.05in}
\begin{algorithmic}[1]
\State Initialize $(\texttt{cumuLLR}(\mathbi{z}), \mathbi{z}\in\{0,1\}^m)$ as an all-zero vector indexed by $\mathbi{z}\in\{0,1\}^m$
\State $n \gets 2^m$

\State $\texttt{cumuLLR}(\mathbi{z}) \gets \sum_{i=1}^{n-1}
\big((1-2\hat{y}_{/\mathbb{B}_i}([\mathbi{z}+\mathbb{B}_i])) L(\mathbi{z}\oplus \mathbi{z}_i) \big)$   for each $\mathbi{z}\in \{0,1\}^m$
\State \Comment{$\mathbi{z}_i$ is the nonzero element in $\mathbb{B}_i$}
\State  \Comment{$\hat{y}_{/\mathbb{B}_i}$ is the decoded codeword, so $\hat{y}_{/\mathbb{B}_i}([\mathbi{z}+\mathbb{B}_i])$ is either $0$ or $1$}

\State $\hat{L}(\mathbi{z}) \gets \frac{\texttt{cumuLLR}(\mathbi{z})} {n-1}$ for each $\mathbi{z}\in \{0,1\}^m$

\State \textbf{return} $\hat{L}$
\end{algorithmic}
\end{algorithm}

In Algorithms~\ref{alg:genhighlvl}--\ref{alg:genAg}, we write the pseudo codes in a mathematical fashion for the ease of understanding. In Appendix~\ref{ap:general}, we present another version of the \texttt{RPA\_RM} function in a program language fashion.

Following the same proof of Proposition~\ref{prop:cml}, we have the following result:
\begin{proposition}
The complexity of Algorithm~\ref{alg:genhighlvl} is $O(n^r\log n)$ in sequential implementation and $O(n^2)$ in parallel implementation with $O(n^r)$ processors.
\end{proposition}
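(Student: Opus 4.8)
The plan is to mirror the inductive proof of Proposition~\ref{prop:cml} essentially verbatim, inducting on the order $r$ of the code and verifying that every non-recursive step of Algorithm~\ref{alg:genhighlvl} has the same asymptotic cost as its BSC counterpart in Algorithm~\ref{alg:highlvl}. First I would establish the base case $r=1$: for a first-order code the algorithm invokes the Fast Hadamard Transform on the LLR vector $(L(\mathbi{z}),\mathbi{z}\in\mathbb{E})$, which we already argued in Section~\ref{sect:gen} implements the ML decoder in time $O(n\log n)$. This replaces the combinatorial FHT decoder used in the BSC base case but carries exactly the same complexity, so the base case of the induction is unchanged.

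For the inductive step I would assume the bound for order $r-1$ and account for the three per-call operations. The projection step computes $L_{/\mathbb{B}_i}$ for each of the $n-1$ one-dimensional subspaces; since every coset $T$ of a one-dimensional $\mathbb{B}_i$ has size two, each entry $L_{/\mathbb{B}_i}(T)$ is evaluated by the closed form \eqref{eq:lt} in $O(1)$ arithmetic operations, so one projection costs $O(n)$ and all $n-1$ projections cost $O(n^2)$. The aggregation step (Algorithm~\ref{alg:genAg}) forms, for each $\mathbi{z}\in\mathbb{E}$, the sum $\texttt{cumuLLR}(\mathbi{z})=\sum_{i=1}^{n-1}(1-2\hat{y}_{/\mathbb{B}_i}([\mathbi{z}+\mathbb{B}_i]))L(\mathbi{z}\oplus\mathbi{z}_i)$ of $n-1$ real terms, hence $O(n)$ per coordinate and $O(n^2)$ in total, followed by the trivial normalization by $n-1$. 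These match the $O(n^2)$ cost of \Proj and of the integer majority vote in the BSC version, so the recursion is governed by the same recurrence $T(n,r)=(n-1)\,T(n/2,r-1)+O(n^2)$. By the induction hypothesis $T(n/2,r-1)=O(n^{r-1}\log n)$, the $n-1$ recursive decodings contribute $O(n^r\log n)$, which dominates the $O(n^2)$ projection and aggregation for $r\ge 2$, giving the claimed sequential bound.

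For the parallel bound I would observe, exactly as in the BSC case, that the $n-1$ recursive decodings are independent and can be launched concurrently, and that unfolding the recursion yields $O(n^r)$ concurrent tasks, accounting for the $O(n^r)$ processors. The parallel running time is then the critical path, which is dominated by the projection and aggregation charged once per recursion level; summing the per-level $O(n^2)$ cost over the geometrically shrinking code lengths $n,n/2,\dots$ telescopes to $O(n^2)$, while the leaf FHTs contribute only $O(n\log n)$ and are subsumed.

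The only point that genuinely needs checking---and hence the main (mild) obstacle---is that replacing the $\mathbb{F}_2$ projection and the integer majority vote by the real-valued LLR combining rule \eqref{eq:lt} and the weighted LLR aggregation of Algorithm~\ref{alg:genAg} does not change these per-step costs. Since \eqref{eq:lt} is a fixed closed-form expression over a size-two coset and the aggregation is an $(n-1)$-term average of sign-adjusted LLRs, each amounts to a constant, respectively $O(n)$, quantity of (floating-point) arithmetic, just like its BSC analogue. Once this is confirmed the recurrence, and therefore the entire argument of Proposition~\ref{prop:cml}, transfers unchanged.
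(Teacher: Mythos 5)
Your proposal is correct and follows essentially the same route as the paper, which simply states that the result ``follows the same proof of Proposition~\ref{prop:cml}'': an induction on $r$ with the FHT base case and the observation that the $n-1$ recursive calls dominate the $O(n^2)$ projection and aggregation work. Your extra bookkeeping (the $O(1)$ cost of evaluating \eqref{eq:lt} on size-two cosets, the explicit recurrence, and the critical-path argument for the parallel bound) only makes explicit what the paper leaves implicit.
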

In Section~\ref{sect:highhigh}, we present an accelerated version of the RPA algorithm for high-rate RM codes, and
in Section~\ref{sect:PandA}, we further discuss other possible options to reduce the computation time by using fewer subspaces in the projection step.

Similarly to Proposition~\ref{prop:ag1in}, we can also show that the decoding error probability of Algorithm~\ref{alg:genhighlvl} is independent of the transmitted codeword for binary-input memoryless symmetric (BMS) channels.

\begin{definition}[BMS channel]  \label{def:bms}
We say that a memoryless channel $W:\{0,1\}\to \cW$ is a BMS channel if there is a permutation $\pi$ of the output alphabet $\cW$ such that $\pi^{-1}=\pi$ and $W(x|1)=W(\pi(x)|0)$ for all $x\in\cW$.
\end{definition}

\begin{proposition} \label{prop:eqde}
Let $W:\{0,1\}\to \cW$ be a BMS channel. Let $c_1$ and $c_2$ be two codewords of $\cR\cM(m,r)$. Let $Y_1$ and $Y_2$ be the (random) channel outputs of transmitting $c_1$ and $c_2$ over $n=2^m$ independent copies of $W$, respectively.
Let $L^{(1)}$ and $L^{(2)}$ be the LLR vectors corresponding to $Y_1$ and $Y_2$, respectively\footnote{$Y_1$ and $Y_2$ are random vectors, and the randomness comes from the channel noise. As a result, $L^{(1)}$ and $L^{(2)}$ are also random vectors.}.
Then for any $c_1,c_2\in\cR\cM(m,r)$, we have
$$
\mathbb{P}(\texttt{RPA\_RM}(L^{(1)},m,r,N_{\max},\theta)\neq c_1)
= \mathbb{P}(\texttt{RPA\_RM}(L^{(2)},m,r,N_{\max},\theta)\neq c_2).
$$
\end{proposition}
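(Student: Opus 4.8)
The plan is to reduce this probabilistic identity to a purely deterministic \emph{equivariance} property of the decoder, via the standard channel-symmetry coupling for BMS channels. First I would record the LLR form of Definition~\ref{def:bms}: from $W(x|1)=W(\pi(x)|0)$ and $\pi=\pi^{-1}$ one gets $\LLR(\pi(x))=-\LLR(x)$. Set $c:=c_1\oplus c_2$, which is a codeword of $\cR\cM(m,r)$ by linearity. I would then couple the two transmissions by defining a vector $\tilde Y$ with $\tilde Y(\mathbi{z})=Y_1(\mathbi{z})$ when $c(\mathbi{z})=0$ and $\tilde Y(\mathbi{z})=\pi(Y_1(\mathbi{z}))$ when $c(\mathbi{z})=1$; a coordinatewise check using the BMS identity gives $\tilde Y\stackrel{d}{=}Y_2$. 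Passing to LLRs, the coupled vector satisfies $\tilde L(\mathbi{z})=(-1)^{c(\mathbi{z})}L^{(1)}(\mathbi{z})$ and $\tilde L\stackrel{d}{=}L^{(2)}$. Consequently it suffices to prove the deterministic claim that, for any LLR vector $L$ and any $c\in\cR\cM(m,r)$, the sign-twisted input $L'(\mathbi{z}):=(-1)^{c(\mathbi{z})}L(\mathbi{z})$ is decoded to $\texttt{RPA\_RM}(L')=\texttt{RPA\_RM}(L)\oplus c$; taking probabilities over $L^{(1)}$ and using $c_2\oplus c=c_1$ then yields the proposition.

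I would prove this equivariance claim by induction on $r$, following the recursion of Algorithm~\ref{alg:genhighlvl}. For the base case $r=1$, the FHT decoder maximizes $\sum_{\mathbi{z}}(-1)^{d(\mathbi{z})}L(\mathbi{z})$ over $d\in\cR\cM(m,1)$; replacing $L$ by $L'$ replaces the objective value at $d$ by the value at $d\oplus c$, and since $d\mapsto d\oplus c$ permutes the linear code $\cR\cM(m,1)$, the maximizer is shifted by exactly $c$. For the inductive step the key ingredient is the behaviour of the projection \eqref{eq:lt} under sign twists. Writing $\boxplus$ for the two-symbol combination defined there, a direct computation gives the identity $((-1)^{\epsilon_1}a)\boxplus((-1)^{\epsilon_2}b)=(-1)^{\epsilon_1\oplus\epsilon_2}(a\boxplus b)$. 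Applied to the coset $T=\{\mathbi{z},\mathbi{z}\oplus\mathbi{z}_i\}$ with $\epsilon_1=c(\mathbi{z})$ and $\epsilon_2=c(\mathbi{z}\oplus\mathbi{z}_i)$, and noting $c_{/\mathbb{B}_i}(T)=c(\mathbi{z})\oplus c(\mathbi{z}\oplus\mathbi{z}_i)$, this says exactly $L'_{/\mathbb{B}_i}(T)=(-1)^{c_{/\mathbb{B}_i}(T)}L_{/\mathbb{B}_i}(T)$. By Lemma~\ref{lm:tbs} we have $c_{/\mathbb{B}_i}\in\cR\cM(m-1,r-1)$, so the induction hypothesis applies to each recursive call and gives $\hat y'_{/\mathbb{B}_i}=\hat y_{/\mathbb{B}_i}\oplus c_{/\mathbb{B}_i}$.

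It then remains to propagate the twist through the aggregation and the iteration loop. Abbreviating $\hat y_{/\mathbb{B}_i}:=\hat y_{/\mathbb{B}_i}([\mathbi{z}+\mathbb{B}_i])$ and using $1-2b=(-1)^b$ together with $c_{/\mathbb{B}_i}([\mathbi{z}+\mathbb{B}_i])=c(\mathbi{z})\oplus c(\mathbi{z}\oplus\mathbi{z}_i)$, each summand of $\texttt{cumuLLR}'(\mathbi{z})$ in Algorithm~\ref{alg:genAg} equals $(-1)^{c(\mathbi{z})}(1-2\hat y_{/\mathbb{B}_i})L(\mathbi{z}\oplus\mathbi{z}_i)$, the two factors $(-1)^{c(\mathbi{z}\oplus\mathbi{z}_i)}$ cancelling. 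Hence $\texttt{cumuLLR}'(\mathbi{z})=(-1)^{c(\mathbi{z})}\texttt{cumuLLR}(\mathbi{z})$ and $\hat L'(\mathbi{z})=(-1)^{c(\mathbi{z})}\hat L(\mathbi{z})$, so one full iteration preserves the twist relation $L'(\mathbi{z})=(-1)^{c(\mathbi{z})}L(\mathbi{z})$. Since $|(-1)^{c(\mathbi{z})}|=1$, the stopping test on line~8 of Algorithm~\ref{alg:genhighlvl} fires at the same iteration in both runs, so an inner induction on the iteration index carries the relation to the final LLR vector. The hard decision on line~13, $\hat c(\mathbi{z})=\mathbbm{1}[L(\mathbi{z})<0]$, then turns the sign flip at the coordinates with $c(\mathbi{z})=1$ into an XOR by $c$, giving $\texttt{RPA\_RM}(L')=\texttt{RPA\_RM}(L)\oplus c$ and completing the claim.

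The hard part will be the inductive step, and in particular the $\boxplus$ identity above: the whole argument rests on the projection \eqref{eq:lt} being equivariant with respect to the quotient map $\cR\cM(m,r)\to\cR\cM(m-1,r-1)$ of Lemma~\ref{lm:tbs}, so that a top-level twist by $c$ is seen by the $i$-th recursive call precisely as a twist by $c_{/\mathbb{B}_i}$. Checking that this equivariance also survives the averaging in the aggregation step and, crucially, the adaptive stopping rule---so that the two runs never halt at different iterations---is where care is needed. A minor caveat concerns ties at line~13 when $L(\mathbi{z})=0$: this has probability zero for the AWGN channel and does not occur for the BSC (or any BMS channel with nonvanishing LLR), so the convention $\mathbbm{1}[L(\mathbi{z})<0]$ is harmless; for a general BMS channel one assumes a symmetric tie-breaking rule, which does not affect the error probability.
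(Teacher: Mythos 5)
Your proof is correct and follows essentially the same route as the paper: the BMS coupling (the paper's permutation $\pi^{c_0}$) reduces the statement to the deterministic equivariance $\texttt{RPA\_RM}((-1)^{c}L)=\texttt{RPA\_RM}(L)\oplus c$, which is exactly Lemma~\ref{lm:ll}, proved by the same induction on $r$ using the sign-equivariance of the projection \eqref{eq:lt} together with Lemma~\ref{lm:tbs} and the sign computation in the aggregation step. Your explicit check that the adaptive stopping rule fires at the same iteration in both runs (and the remark on tie-breaking) are details the paper leaves implicit, but otherwise the two arguments coincide.
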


The proof is given in Appendix~\ref{ap:eqde}.
Similarly to Proposition~\ref{prop:ag1in},
this proposition is also very useful for simulations because we can simply transmit the all-zero codeword over the BMS channel $W$ to measure the decoding error probability.

In the last part of this section, we present the list decoding version of the \texttt{RPA\_RM} function. The main idea is already explained in Section~\ref{sect:highlist}. Here we only write down the pseudo code of the list decoding version. Note that the purpose of line 8 is to make sure that $\hat{c}^{(\mathbi{u})}$ is a codeword of RM code, which is not always true for the decoding result of the \texttt{RPA\_RM} function.

Finally, we present the following proposition on the memory requirement for sequential implementation of RPA decoder. A remarkable thing here is that the memory requirement for the list decoding version of RPA algorithm is $5n$, which is independent of the list size, in contrast to SCL decoder of polar codes.
\begin{proposition}
The memory needed for sequential implementation of the RPA decoder without list decoding is no more than $4n$, and the memory needed for sequential implementation of the RPA decoder with list decoding is no more than $5n$, where $n$ is the code length. Note that the memory requirement for list decoding version does not depend on the list size.
\end{proposition}
\begin{proof}
As we mentioned above, Algorithm~\ref{alg:genhighlvl} is written in  compact fashion for the ease of understanding, but it is not space-efficient in practical implementation.  
The version that we really implemented in practice and used for simulations is Algorithm~\ref{alg:general} in Appendix~\ref{ap:general}, and our analysis of space complexity is based on Algorithm~\ref{alg:general}.

The most important difference between Algorithm~\ref{alg:general} and Algorithm~\ref{alg:genhighlvl} is that in Algorithm~\ref{alg:genhighlvl} we first finish all the recursive decoding and then perform the aggregation step; while in Algorithm~\ref{alg:general} the recursive decoding step and the aggregation step are interleaved together, and in this way we can save huge amount of memory compared to Algorithm~\ref{alg:genhighlvl}.

We start with RPA decoder without list decoding, and we prove by induction on $r$, the order of the RM code. 
For the base case of $r=1$, the claim clearly holds. Now assume that the claim holds for all RM codes with order $<r$ and we prove it for order $r$.
In Algorithm~\ref{alg:general}, we need $n$ floating number positions to store the LLR vector and another $n$ floating number positions to store the \texttt{cumuLLR} vector. Then we project onto the cosets of each one-dimensional subspace sequentially. For each projected codeword, we need to decode a RM code with length $n/2$ and order $r-1$. By induction hypothesis, this take $4*n/2=2n$ floating number positions. Therefore in total we need $n+n+2n=4n$ floating number positions. This establishes the inductive step and completes the proof for the non-list-decoding version.

The memory requirement for list decoding version follows directly from that of the vanilla version: Since we perform list decoding sequentially, i.e., we only decode one list at a time, the only extra memory we need in the list decoding version is the $n$ floating number positions that is used to store currently best known decoding result. Therefore, the space complexity for the list decoding version is $5n$.
\end{proof}

\begin{algorithm}  
\caption{The \texttt{RPA\_LIST} decoding function for general binary-input memoryless channels}   
\label{alg:list}
\textbf{Input:} The LLR vector $(L(\mathbi{z}), \mathbi{z}\in \{0,1\}^m)$; the parameters of the Reed-Muller code $m$ and $r$; the maximal number of iterations $N_{\max}$; the exiting threshold $\theta$; the list size $2^t$

\textbf{Output:} The decoded codeword $\hat{c}$

\vspace*{0.05in}
\begin{algorithmic}[1]
\State $\tilde{L}\gets L$
\State  $(\mathbi{z}_1,\mathbi{z}_2,\dots,\mathbi{z}_t) \gets$ indices of the $t$ smallest entries in $(|L(\mathbi{z})|, \mathbi{z}\in \{0,1\}^m)$
\State \Comment{$\mathbi{z}_i\in\{0,1\}^m$ for all $i=1,2,\dots,t$}
\State $L_{\max} \gets 2\max(|L(\mathbi{z})|, \mathbi{z}\in \{0,1\}^m)$
\For {each $\mathbi{u}\in\{L_{\max},-L_{\max}\}^t$}
\State $(L(\mathbi{z}_1), L(\mathbi{z}_2), \dots, L(\mathbi{z}_t)) \gets \mathbi{u}$
\State $\hat{c}^{(\mathbi{u})} \gets \texttt{RPA\_RM}(L,m,r,N_{\max}, \theta)$
\State $\hat{c}^{(\mathbi{u})} \gets \texttt{Reedsdecoder}(\hat{c}^{(\mathbi{u})})$
\Comment{\texttt{Reedsdecoder} is the classical decoding algorithm in \cite{Reed54}}
\EndFor
\State $\mathbi{u}^* \gets \argmax_{\mathbi{u}} \sum_{\mathbi{z}\in \{0,1\}^m} \Big( (-1)^{\hat{c}^{(\mathbi{u})}(\mathbi{z})} \tilde{L}(\mathbi{z}) \Big)$
\State \Comment{This follows from \eqref{eq:mxr}. Maximization is over $\mathbi{u}\in\{L_{\max},-L_{\max}\}^t$}
\State $\hat{c}\gets \hat{c}^{(\mathbi{u}^*)}$
\State \textbf{return} $\hat{c}$
\end{algorithmic}
\end{algorithm}

{
\section{Simplified RPA algorithm for high rate RM codes}
\label{sect:highhigh}
In this section, we provide some simplified versions of the RPA decoder, which significantly accelerate the decoding process while maintaining the same (nearly optimal) decoding error probability for certain RM codes with rate $>0.5$.

As mentioned in the previous section, we can accelerate the decoding algorithm by using fewer subspaces in the projection step. Moreover, instead of using one-dimensional subspaces, in this section we propose to use a selected subsets of two-dimensional subspaces in the projection step.
In particular, we only project onto the $\binom{m}{2}$ two-dimensional subspaces spanned by two standard basis vectors of $\bE$. The standard basis vector of $\bE$ are $\mathbi{e}^{(1)},\dots,\mathbi{e}^{(m)}$, where $\mathbi{e}^{(i)}$ is defined as the vector with $1$ in the $i$th position and $0$ everywhere else.
Then we write the $\binom{m}{2}$ two-dimensional subspaces as $\{\bB_{i,j}:1\le i<j\le m\}$, where 
$$
\bB_{i,j}:=\spn(\mathbi{e}^{(i)},\mathbi{e}^{(j)}).
$$

Note that projection onto cosets of two-dimensional subspaces is different from onto that of one-dimensional subspaces: In the one-dimensional case, each coset only contains two coordinates, and we only need to combine the LLR of two coordinates to obtain the LLR of the coset, as we did in \eqref{eq:lt}. In the two-dimensional case, each coset contains four coordinates, and we need to combine the LLR of four coordinates to obtain the LLR of the coset. Fortunately,
for any RM code with order $r\ge 2$,
we can use exactly the same idea  in the proof of Lemma~\ref{lm:soez} to show that any four coordinates in a coset of a two-dimensional subspace are also independent; see the explanation in Remark~\ref{rmk:1} below. Therefore, we obtain the following counterpart of \eqref{eq:lt} for a coset $T$ of two-dimensional subspace assuming that $T=\{\mathbi{z}^{(1)},\mathbi{z}^{(2)},\mathbi{z}^{(3)},\mathbi{z}^{(4)}\}$:
\begin{equation} \label{eq:fkn}
\begin{aligned}
L_{/\mathbb{B}}(T)= &\ln \Big( \exp \big( \sum_{i=1}^4 L(\mathbi{z}^{(i)}) \big)
+\sum_{1\le i<j \le 4} \exp \big(  L(\mathbi{z}^{(i)}) + L(\mathbi{z}^{(j)}) \big)  +1 \Big) \\
& - 
\ln \Big( \sum_{i=1}^4 \exp(L(\mathbi{z}^{(i)})) 
+\sum_{i=1}^4 \exp(
\sum_{j\in[4]\setminus\{i\}} L(\mathbi{z}^{(j)})) \Big).
\end{aligned}
\end{equation}

\begin{remark} \label{rmk:1}
It is well known that for a linear code, if there is a codeword taking value $1$ at a certain coordinate, then the number of codewords taking value $1$ at this coordinate is the same as the number of codewords taking value $0$ at this coordinate. This follows directly from the linearity of the code. The proof of Lemma~\ref{lm:soez} follows from the same idea: By the linearity of code, we only need to show that for two distinct coordinates, there are different codewords in RM codes that take all four possible values $(0,0), (0,1), (1,0), (1,1)$ at these two coordinates, and this follows by noting that (i) any two distinct coordinates form a coset of a one-dimensional subspace; (ii) by definition of RM codes, restricting RM codes with order $r\ge 1$ on such cosets gives us $\cR\cM(1,1)$, which contains all 4 binary vectors of length $2$. Now in the case of two-dimensional subspace, we still use the same reasoning: By linearity of the code, we only need to show that for any $4$ coordinates that form a coset of a 2-dimensional subspace, there are different codewords in RM codes with order $r\ge 2$ that take all $2^4$ possible values $\{0,1\}^4$ at these four coordinates.
This again follows by noting that restricting RM codes with order $r\ge 2$ on such cosets gives us $\cR\cM(2,2)$, which contains all $16$ binary vectors of length $4$.
\end{remark}

\begin{algorithm}
\caption{The \texttt{Simplified\_RPA} decoding function}    \label{alg:simp}
\textbf{Input:} The LLR vector $(L(\mathbi{z}), \mathbi{z}\in \{0,1\}^m)$; the parameters of the Reed-Muller code $m$ and $r$; the maximal number of iterations $N_{\max}$; the exiting threshold $\theta$

\textbf{Output:} The decoded codeword $\hat{c}$

\vspace*{0.05in}
\begin{algorithmic}[1]
\State  $\mathbb{E} := \{0,1\}^m$
\For {$j=1,2,\dots,N_{\max}$} 
\State $L_{/\mathbb{B}_{i,j}} \gets (L_{/\mathbb{B}_{i,j}}(T),T\in\mathbb{E}/\mathbb{B}_{i,j})$  for $1\le i<j\le m$                         \Comment{Projection}
\State \Comment{$L_{/\mathbb{B}_{i,j}}(T)$ is calculated according to \eqref{eq:fkn}}     
\State $\hat{y}_{/\mathbb{B}_{i,j}} \gets \texttt{Simplified\_RPA}(L_{/\mathbb{B}_{i,j}},m-2,r-2,N_{\max}, \theta)$
for $1\le i<j\le m$ 
\State \Comment{Recursive decoding}
\State \Comment{If $r=3$, then we use the Fast Hadamard Transform to decode the first-order RM code}
\State \Comment{If $r=4$, then we use the normal RPA algorithm to decode the second-order RM code}
\State $\hat{L} \gets \texttt{Simp\_Aggregation}(L,\{\hat{y}_{/\mathbb{B}_{i,j}}:1\le i<j\le m\})$           \Comment{Aggregation}
\If {$|\hat{L}(\mathbi{z})-L(\mathbi{z})|\le \theta |L(\mathbi{z})|$ for all $\mathbi{z}\in\mathbb{E}$}
\Comment{The algorithm reaches a stable point}
\State \textbf{break}  
\EndIf
\State $L \gets \hat{L}$
\EndFor
\State $\hat{c}(\mathbi{z}) \gets \mathbbm{1}[L(\mathbi{z}) < 0]$ for each $\mathbi{z}\in\mathbb{E}$

\State \textbf{return} $\hat{c}$
\end{algorithmic}
\end{algorithm}

\begin{algorithm}
\caption{The \texttt{Simp\_Aggregation} function in the \texttt{Simplified\_RPA} algorithm}    \label{alg:simpAg}
\textbf{Input:} $L,\{\hat{y}_{/\mathbb{B}_{i,j}}:1\le i<j\le m\}$

\textbf{Output:} $\hat{L}$

\vspace*{0.05in}
\begin{algorithmic}[1]

\State Calculate $\Est_{i,j}(\mathbi{z})$ from $L$ and $\{\hat{y}_{/\mathbb{B}_{i,j}}:1\le i<j\le m\}$ according to \eqref{eq:ynt}

\State $\hat{L}(\mathbi{z})\gets
\frac{1}{\binom{m}{2}} \sum_{1\le i<j\le m} \Est_{i,j}(\mathbi{z})$   for each $\mathbi{z}\in \{0,1\}^m$

\State \textbf{return} $\hat{L}$
\end{algorithmic}
\end{algorithm}

After projecting $\cR\cM(m,r)$ onto the cosets of these two-dimensional subspaces, we will obtain RM codes with parameters $m-2$ and $r-2$, as proved in Lemma~\ref{lm:tbs}.
After decoding these $\binom{m}{2}$ projected codes $\cR\cM(m-2,r-2)$, we obtain $\{\hat{y}_{/\mathbb{B}_{i,j}}:1\le i<j\le m\}$, where 
$\hat{y}_{/\mathbb{B}_{i,j}}=(\hat{y}_{/\mathbb{B}_{i,j}}(T),T\in\mathbb{E}/\mathbb{B}_{i,j})$. Now we are ready to go to the aggregation step using both the recursive decoding result $\{\hat{y}_{/\mathbb{B}_{i,j}}:1\le i<j\le m\}$ and the original LLR vector $L$.
In particular, when decoding $c(\mathbi{z})$, the relevant coordinate in $\hat{y}_{/\mathbb{B}_{i,j}}$ is $\hat{y}_{/\mathbb{B}_{i,j}}([\mathbi{z}+\mathbb{B}_{i,j}])$, where $[\mathbi{z}+\mathbb{B}_{i,j}]$ is the coset of $\mathbb{B}_{i,j}$ that contains $\mathbi{z}$.
Now suppose that the other three vectors in $[\mathbi{z}+\mathbb{B}_{i,j}]$ apart from $\mathbi{z}$ itself are $\mathbi{z}^{(1)},\mathbi{z}^{(2)},\mathbi{z}^{(3)}$. Then from $\hat{y}_{/\mathbb{B}_{i,j}}([\mathbi{z}+\mathbb{B}_{i,j}])$ and $L(\mathbi{z}^{(1)}),L(\mathbi{z}^{(2)}),L(\mathbi{z}^{(3)})$, we obtain the following estimate of the LLR of $c(\mathbi{z})$:
\begin{equation} \label{eq:ynt}
\begin{aligned}
\Est_{i,j}(\mathbi{z})=
\ln \Big( \exp \big( \sum_{i=1}^3 L(\mathbi{z}^{(i)}) \big)
+ \sum_{i=1}^3 \exp(L(\mathbi{z}^{(i)})) \Big)  - \ln \Big( \sum_{i=1}^3 \exp(
\sum_{j\in[3]\setminus\{i\}} L(\mathbi{z}^{(j)})) 
+ 1\Big) \\
\text{if~} \hat{y}_{/\mathbb{B}_{i,j}}([\mathbi{z}+\mathbb{B}_{i,j}])=0 , \\
\Est_{i,j}(\mathbi{z})=
- \ln \Big( \exp \big( \sum_{i=1}^3 L(\mathbi{z}^{(i)}) \big)
+ \sum_{i=1}^3 \exp(L(\mathbi{z}^{(i)})) \Big)  + \ln \Big( \sum_{i=1}^3 \exp(
\sum_{j\in[3]\setminus\{i\}} L(\mathbi{z}^{(j)})) 
+ 1\Big) \\
\text{if~} \hat{y}_{/\mathbb{B}_{i,j}}([\mathbi{z}+\mathbb{B}_{i,j}])=1 .
\end{aligned}
\end{equation}
We calculate such an estimate for all pairs of $(i,j)$ such that $1\le i<j\le m$. Then finally we update the LLR of $c(\mathbi{z})$ as the average of these $\binom{m}{2}$ estimates, as follows:
$$
\hat{L}(\mathbi{z})=\frac{1}{\binom{m}{2}} \sum_{1\le i<j\le m} \Est_{i,j}(\mathbi{z}).
$$
Finally, as in all the previous sections, we iterate this decoding procedure a few times for the LLR vector to converge to a stable value.

We call the decoding algorithm proposed in this section the \texttt{Simplified\_RPA} algorithm, as opposed to the normal RPA algorithm proposed in the previous section.
Note here that in the recursive decoding procedure, i.e., when we decode $\cR\cM(m-2,r-2)$, we still use this simplified version of RPA algorithm instead of doing full projection step. Since each time we reduce $r$ by $2$, if the original $r$ is even then we will not reach the first-order RM codes. In this case, we use the normal RPA decoder when we reach the second-order RM codes.
In Algorithm~\ref{alg:simp} and Algorithm~\ref{alg:simpAg} we provide pseudo-codes for the \texttt{Simplified\_RPA} algorithm.
Note that in line 7--8 of Algorithm~\ref{alg:simp}, we distinguish between the cases of $r$ being even and $r$ being odd: For even $r$, eventually we will need to decode a second-order RM code using the normal RPA decoder while for odd $r$, we only need to decode first-order RM code in the final recursive step.
As we will show in Section~\ref{sect:sim} (see Fig.~\ref{fig:cpAWGN}), by applying the list decoding version of the \texttt{Simplified\_RPA} algorithm, we can decode $\cR\cM(7,4)$ and $\cR\cM(8,5)$ with list size no larger than $8$ such that the decoding error probability is the same as that of ML decoder. Moreover, it runs even faster than decoding lower rate codes such as $\cR\cM(8,3)$; see Table~\ref{tb:rt}.

}

\section{Simulation results} \label{sect:sim}

\subsection{Comparison with polar codes}
We run our decoding algorithm for second and third order Reed-Muller codes with code length $256,512$ and $1024$ over AWGN channels and BSCs, and we compare its performance with the recent algorithms for polar codes with the same length and dimension. 
We compare to two versions of polar codes: Polar codes with optimal CRC size and polar codes without CRC,
and we use the Successive Cancellation List (SCL) decoder introduced by Tal and Vardy \cite{Tal15} as the decoder, where we set list size to be $32$.
Note that SCL decoder with list size $32$ is one of the most widely used decoders for polar codes.

The simulation results for AWGN channels are plotted in Figure~\ref{fig:cpAWGN},
where the number of Monte Carlo trials is $10^5$.
We provide the simulation results for all RM codes with length 128 and 256, including $\cR\cM(7,2),\cR\cM(7,3),\cR\cM(7,4),\cR\cM(8,2),\cR\cM(8,3),\cR\cM(8,4),\cR\cM(8,5)$. This should give a complete picture of the performance of our decoder for all code rates. Note that we skipped $\cR\cM(7,5)$ and $\cR\cM(8,6)$ because they are extended Hamming codes, and optimal decoders are well known for these two codes.
Moreover, for certain cases the list decoding version of RPA decoding algorithm has almost the same performance as the Maximal Likelihood (ML) decoder for RM codes\footnote{We use the method in \cite{Dumer04,Dumer06a} to measure the ML lower bound: Whenever our decoder outputs a wrong codeword, we compare the posterior probability of the decoded word and that of the correct codeword. Most of the time the posterior probability of the decoded word is larger, which means that even an ML decoder will make a mistake in this case. Note that this method was also used in \cite{Tal15}.}. The performance improvement is thus in agreement with the advantages of RM codes over polar codes under ML decoding \cite{Mondelli14}. 
See Section~\ref{sect:pv} for comparisons with Dumer's recursive decoding algorithm \cite{Dumer04,Dumer06,Dumer06a}, which is the best known decoder in the literature for RM codes over AWGN channels.
Note also that the algorithm in \cite{Santi18} only applies to codes with very short code length (no larger than $128$) due to complexity constraints.

For the BSC channel, the simulation results are plotted in Figure~\ref{fig:cp}. 
The number of Monte Carlo trials is $10^5$.
We also tested in this case all the previous decoding algorithms known for RM codes, including Reed's algorithm \cite{Reed54} and the algorithm from Saptharishi-Shpilka-Volk \cite{Saptharishi17}. For these two algorithms, the decoding error probability exceeds $0.1$ for the tested parameters, so we did not include them in Figure~\ref{fig:cp} as they would not fit.
See Section~\ref{sect:pv} for comparisons with the Sidel'nikov-Pershakov algorithm \cite{Sidel92} and its variations \cite{Loidreau04,Sakkour05}.
From Figure~\ref{fig:cp}, we can clearly see that the new decoding algorithm for RM codes significantly outperforms the SCL decoder for CRC-aided polar codes.

We also compare the running time of our decoder and the SCL decoder for polar codes.
For polar codes, we use techniques from two accelerated version \cite{Balatsoukas15,Sarkis15} of the SCL decoder (in particular the ``min-sum approximation" in \cite{Balatsoukas15}) so that we can achieve a much smaller running time than the original version of SCL decoder while maintaining almost the same decoding error probability.
The results are listed in Table~\ref{tb:rt}. We can see that for second order RM codes as well as the high-rate RM codes where we use the \texttt{Simplified\_RPA} algorithm to decode, our decoder is always faster than the SCL decoder for polar codes with the same parameters. However, for third order RM codes, our decoder is slower than the SCL decoder; see Fig~\ref{fig:cpAWGN} for decoding error probability and Table~\ref{tb:rt} for running time.

\begin{figure}
\centering
\begin{subfigure}{0.3\linewidth} 
\centering
\includegraphics[width=\linewidth]{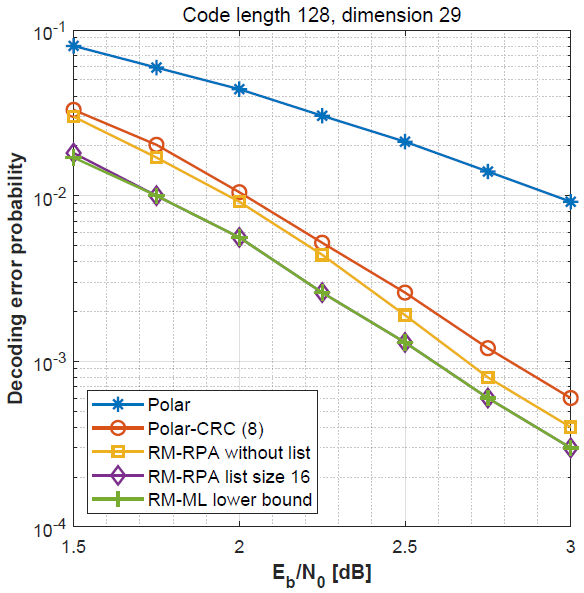}
\caption{$\cR\cM(7,2)$ v.s. polar codes}
\end{subfigure}
~\hspace*{0.1in}
\begin{subfigure}{0.3\linewidth}
\centering
\includegraphics[width=\linewidth]{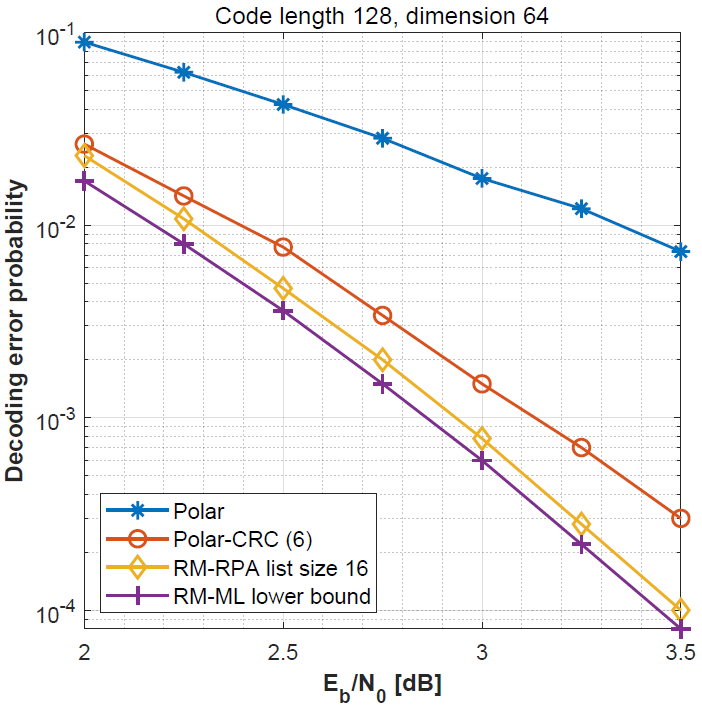}
\caption{$\cR\cM(7,3)$ v.s. polar codes}
\end{subfigure}
~\hspace*{0.1in}
\begin{subfigure}{0.3\linewidth}
\centering
\includegraphics[width=\linewidth]{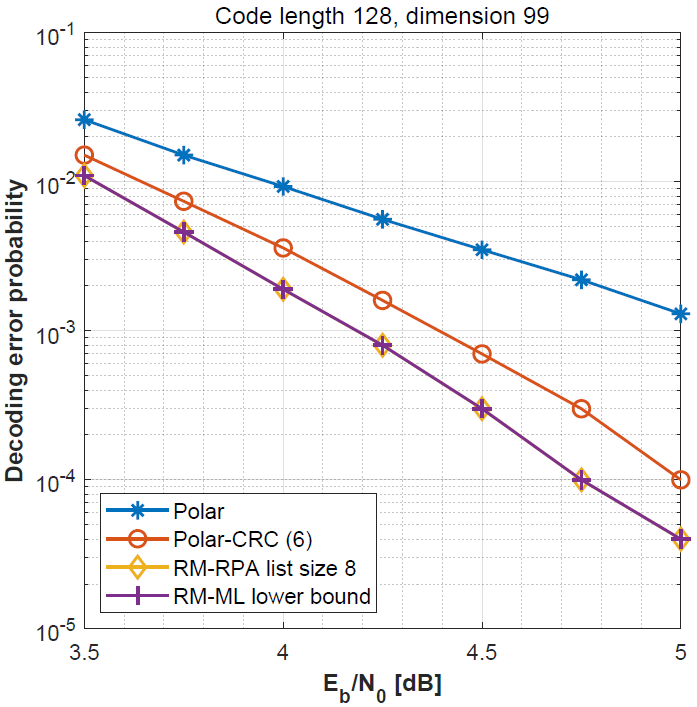}
\caption{$\cR\cM(7,4)$ v.s. polar codes}
\end{subfigure}

\vspace*{0.1in}

\begin{subfigure}{0.3\linewidth} 
\centering
\includegraphics[width=\linewidth]{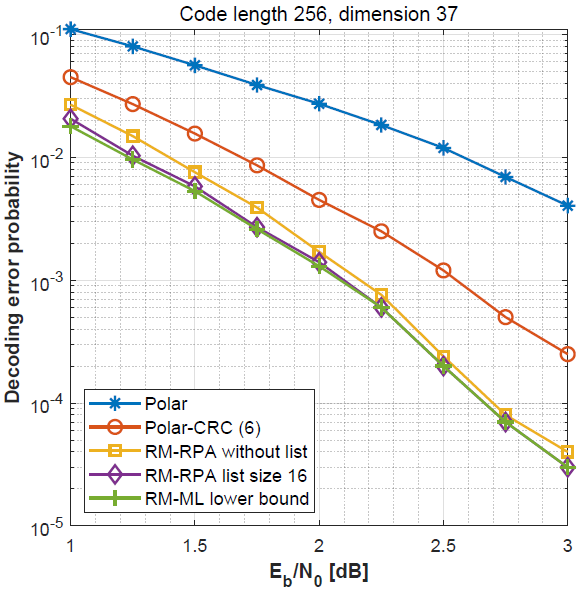}
\caption{$\cR\cM(8,2)$ v.s. polar codes}
\end{subfigure}
~\hspace*{0.1in}
\begin{subfigure}{0.3\linewidth}
\centering
\includegraphics[width=\linewidth]{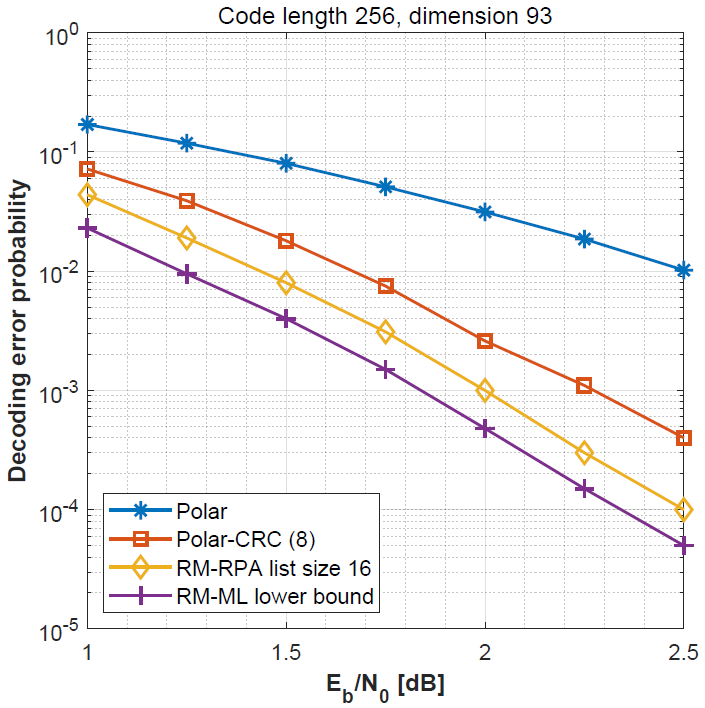}
\caption{$\cR\cM(8,3)$ v.s. polar codes}
\end{subfigure}
~\hspace*{0.1in}
\begin{subfigure}{0.3\linewidth}
\centering
\includegraphics[width=\linewidth]{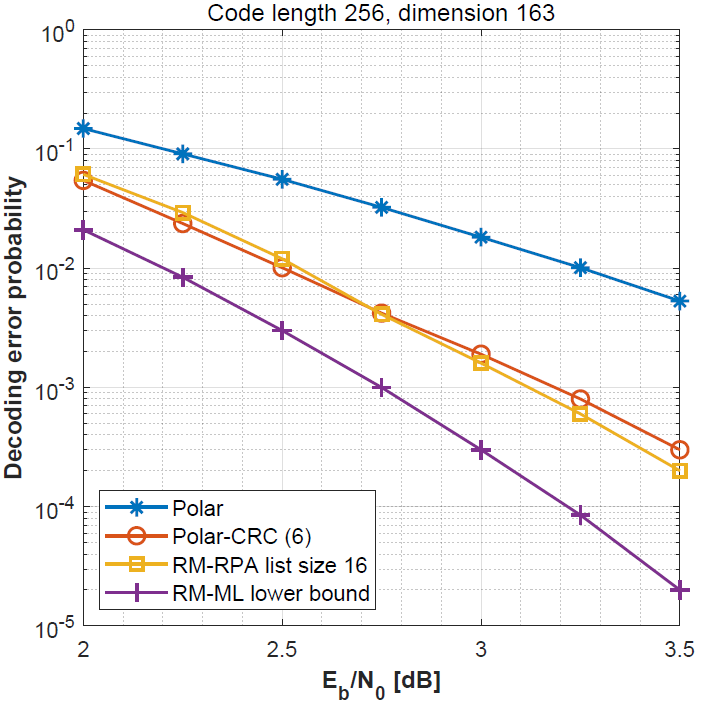}
\caption{$\cR\cM(8,4)$ v.s. polar codes}
\end{subfigure}

\vspace*{0.1in}
\begin{subfigure}{0.3\linewidth}
\centering
\includegraphics[width=\linewidth]{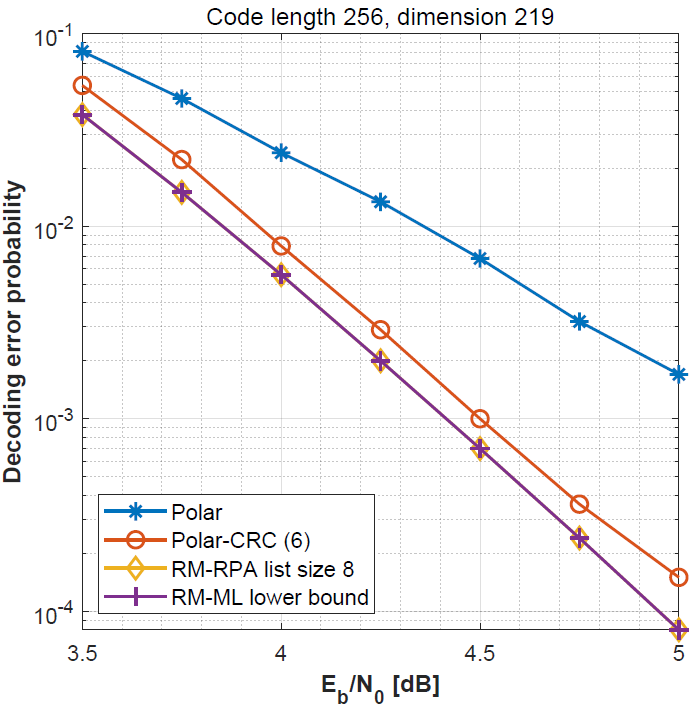}
\caption{$\cR\cM(8,5)$ v.s. polar codes}
\end{subfigure}
~\hspace*{0.1in}
\begin{subfigure}{0.3\linewidth} 
\centering
\includegraphics[width=\linewidth]{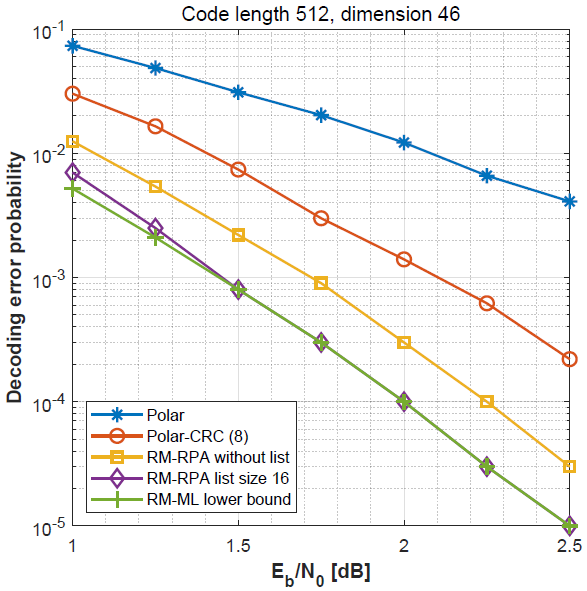}
\caption{$\cR\cM(9,2)$ v.s. polar codes}   
\end{subfigure}
~\hspace*{0.1in}
\begin{subfigure}{0.3\linewidth} 
\centering
\includegraphics[width=\linewidth]{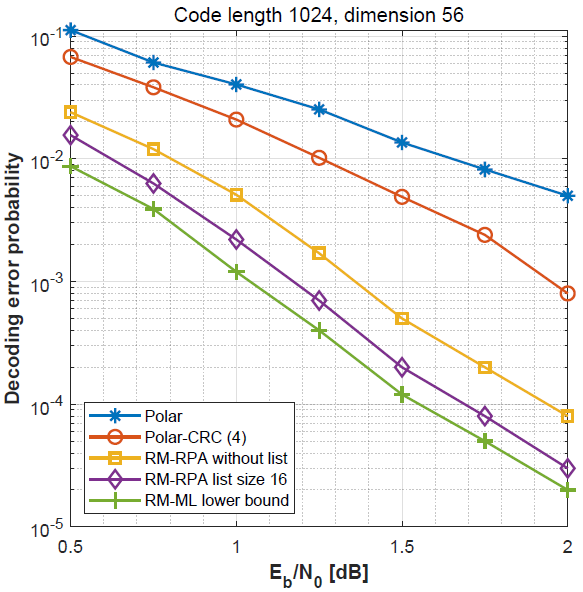}
\caption{$\cR\cM(10,2)$ v.s. polar codes}   
\end{subfigure}

\caption{Comparison between Reed-Muller codes and polar codes over AWGN channels. For $\cR\cM(7,4)$ and $\cR\cM(8,5)$, we use the \texttt{Simplified\_RPA} algorithm proposed in Section~\ref{sect:highhigh}, and for all the other RM codes, we use the normal RPA algorithm proposed in Section~\ref{sect:gen}.
For polar codes with or without CRC, we always use SCL decoder with list size $32$.
For polar codes with CRC, we test various choices of CRC length and choose the optimal one that gives the best performance. The number in the bracket after ``Polar-CRC" is the optimal CRC length that we use.}
\label{fig:cpAWGN}
\end{figure}

\begin{figure}
\centering
\begin{subfigure}{0.3\linewidth} 
\centering
\includegraphics[width=\linewidth]{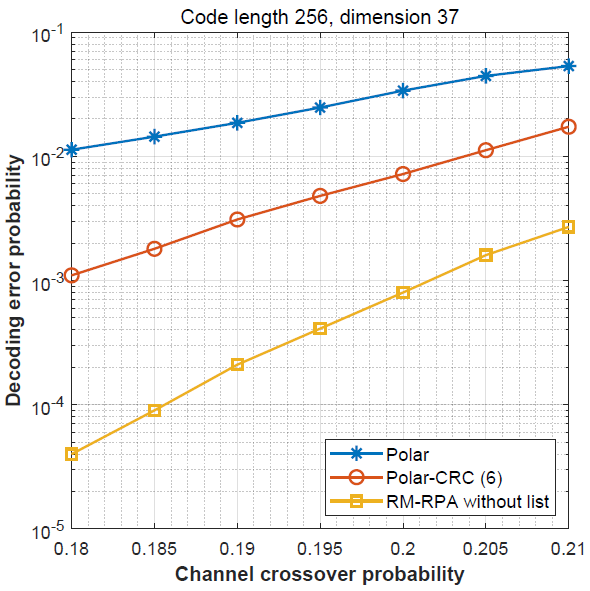}
\caption{$\cR\cM(8,2)$ vs Polar codes}
\end{subfigure}
~\hspace*{0.1in}
\begin{subfigure}{0.3\linewidth}
\centering
\includegraphics[width=\linewidth]{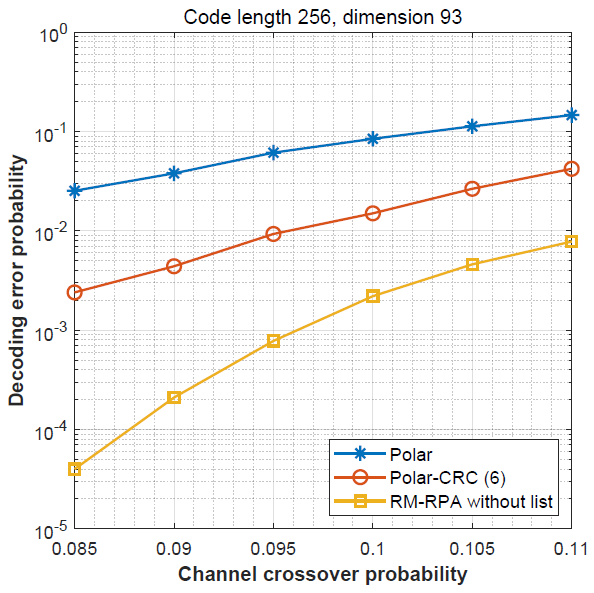}
\caption{$\cR\cM(8,3)$ vs Polar codes}
\end{subfigure}
~\hspace*{0.1in}
\begin{subfigure}{0.3\linewidth}
\centering
\includegraphics[width=\linewidth]{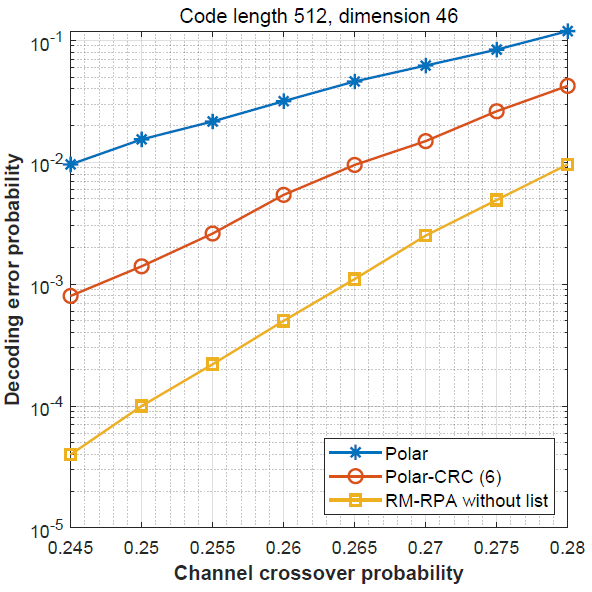}
\caption{$\cR\cM(9,2)$ vs Polar codes}
\end{subfigure}

\caption{Comparison between Reed-Muller codes and polar codes over BSC channels. For RM codes we use the RPA decoder in Algorithm~\ref{alg:highlvl} {\bf without list decoding}. For polar codes, no matter with or without CRC, we always use SCL decoder with list size $32$.}
\label{fig:cp}
\end{figure}

\begin{table}[h]
\begin{center}
\begin{tabular}{| c | c | c | c | c | c |}
\hline
$\cR\cM(7,2)$ & $P(7,2)$ & $\cR\cM(7,3)$ & $P(7,3)$ & $\cR\cM(7,4)$ & $P(7,4)$ \\ \hline
1ms & 7ms & 26ms & 15ms & 6ms & 23ms  \\  \hline 
\end{tabular}

\vspace*{0.1in}
\begin{tabular}{| c | c | c | c | c | c |}
\hline 
$\cR\cM(8,2)$ & $P(8,2)$ & $\cR\cM(8,3)$ & $P(8,3)$ & $\cR\cM(8,4)$ & $P(8,4)$ \\ \hline
4.3ms & 17ms & 236ms & 40ms & 5.9s & 64ms \\  \hline
\end{tabular}

\vspace*{0.1in}
\begin{tabular}{| c | c | c | c | c | c |}
\hline 
$\cR\cM(8,5)$ & $P(8,5)$ & $\cR\cM(9,2)$ & $P(9,2)$ & $\cR\cM(10,2)$ & $P(10,2)$ \\ \hline
14ms & 82ms & 18.2ms & 41ms & 76.7ms & 95ms \\ \hline
\end{tabular}
\caption{Comparison of decoding time between RM codes and polar codes. 
$P(m,r)$ denotes polar codes with the same length and dimension as $\cR\cM(m,r)$.}
\label{tb:rt}
\end{center}
\end{table}

\subsection{Comparison with previous decoding algorithms of RM codes}  \label{sect:pv}

We first compare with the decoding algorithm proposed by Sidel'nikov and Pershakov \cite{Sidel92}, which was later improved/modified in \cite{Loidreau04,Sakkour05}.
When decoding the second-order RM codes, the RPA decoding algorithm has some high-level similarity with the decoding algorithms in \cite{Sidel92,Loidreau04,Sakkour05} in the sense that the first step in all these algorithms is to project the received word $y$ onto the cosets of all the $n-1$ one-dimensional subspaces and decode the projected first-order RM codewords to obtain $\hat{y}_{/\mathbb{B}_1}, \hat{y}_{/\mathbb{B}_2}, \dots, \hat{y}_{/\mathbb{B}_{n-1}}$.
However, the next steps in \cite{Sidel92,Loidreau04,Sakkour05} are quite different from the RPA decoding algorithm and result in a worse performance than the RPA algorithm. More precisely, the main differences are:
\begin{itemize}
    \item The decoding algorithms in \cite{Loidreau04,Sakkour05} only work for the second order RM codes. For higher-order RM codes, the decoding algorithm proposed in \cite{Sidel92} is completely different from the RPA algorithm, and their performance is much worse than the RPA algorithm; see Fig.~\ref{fig:jet}(c).
    \item For second order RM codes, after the projection step, the RPA algorithm make use of both the decoding results of the projected codewords $\hat{y}_{/\mathbb{B}_1}, \hat{y}_{/\mathbb{B}_2}, \dots, \hat{y}_{/\mathbb{B}_{n-1}}$ and the original received word $y$ to obtain the final decoding results while the algorithms in \cite{Sidel92,Loidreau04,Sakkour05} only make use of $\hat{y}_{/\mathbb{B}_1}, \hat{y}_{/\mathbb{B}_2}, \dots, \hat{y}_{/\mathbb{B}_{n-1}}$ to obtain the coefficients of all the degree-2 monomials\footnote{Recall Definition~\ref{def:kee} and the discussion following it.} in the final decoding results. As discussed above, the projected codewords are more noisy than the original received words $y$. As a consequence, the performance of the algorithms in \cite{Sidel92,Loidreau04,Sakkour05} is worse than that of the RPA algorithm; see Fig.~\ref{fig:jet}(a),(b).
    \item The RPA algorithm uses $\hat{y}_{/\mathbb{B}_1}, \hat{y}_{/\mathbb{B}_2}, \dots, \hat{y}_{/\mathbb{B}_{n-1}}$ together with the original received word $y$ to correct errors bitwise in the original received word $y$ while the algorithms in \cite{Sidel92,Loidreau04,Sakkour05} use $\hat{y}_{/\mathbb{B}_1}, \hat{y}_{/\mathbb{B}_2}, \dots, \hat{y}_{/\mathbb{B}_{n-1}}$ to correct errors wordwise among themselves.
\end{itemize}

In Fig.~\ref{fig:jet}, we compare the RPA algorithm with the algorithms in \cite{Sidel92,Loidreau04,Sakkour05} for decoding Reed-Muller codes over AWGN and BSC channels. Note that there are two parameters $s$ and $h$ in the Sidelnikov-Pershakov algorithm, where $s$ is the list size of decoding each projected codeword, and $h$ is the number of iterations when decoding the projected codewords. In our simulations, we set $s=4$ and $h=3$ since larger values of $s$ and $h$ will not further improve the performance.

Next we compare the RPA algorithm with Dumer's recursive list decoding algorithm \cite{Dumer04,Dumer06,Dumer06a}. Dumer's list decoding algorithm provides a tradeoff between the decoding error probability and the decoding time. More precisely, if we set the list size to be large enough (e.g., exponential in $n$), then we can achieve the same performance as the maximal likelihood decoder, but we will also need exponential running time. If we choose small list size, then the algorithm runs fast but the decoding error will deteriorate.

In our simulations, we use the RPA algorithm and Dumer's algorithm to decode RM codes over AWGN channels, and we find that the decoding error probability of RPA is slightly better (smaller) than Dumer's algorithm, but the running time of RPA is typically larger. We have tested two cases $\cR\cM(8,2)$ and $\cR\cM(9,3)$, and the performance is given in Fig.~\ref{fig:Dumer}. For $\cR\cM(8,2)$, the running time of our algorithm is 4.3ms, and the running time of Dumer's algorithm is 0.85ms. For $\cR\cM(9,3)$, the running time of our algorithm is 3s, and the running time of Dumer's algorithm is 0.14s.



In \cite{Santi18}, simulation results are presented for $\cR\cM(7,3)$.  Their results are based on applying belief propagation to all minimum weight parity checks.  This does seem indirectly related to using all first-order RM subcodes to decode.  For $\cR\cM(7,3)$, the decoding complexities of these two approaches are also similar.  For RPA, each of $127*63$ projections takes roughly $32*5$ operations to decode, giving 1.2M operations per iteration.  For the algorithm in \cite{Santi18}, there are 94448 minimum weight parity checks of weight 16 giving roughly 1.5M operations per iteration. It turns out that for $\cR\cM(7,3)$, both the performance and the running time of RPA decoder are similar to the algorithm in \cite{Santi18}.

We also note that in \cite{Hashemi18}, an algorithm with near-ML performance was also provided for $\cR\cM(7,3)$.

\begin{figure}
\centering
\begin{subfigure}{0.3\linewidth} 
\centering
\includegraphics[width=\linewidth]{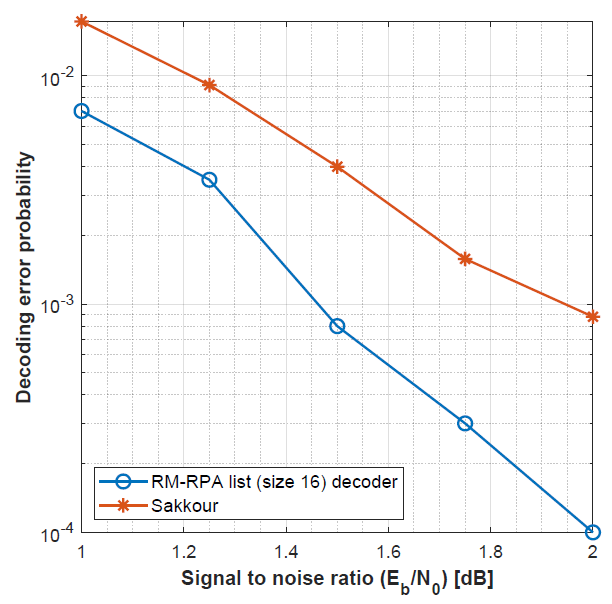}
\caption{$\cR\cM(9,2)$ over AWGN}
\end{subfigure}
~\hspace*{0.1in}
\begin{subfigure}{0.3\linewidth} 
\centering
\includegraphics[width=\linewidth]{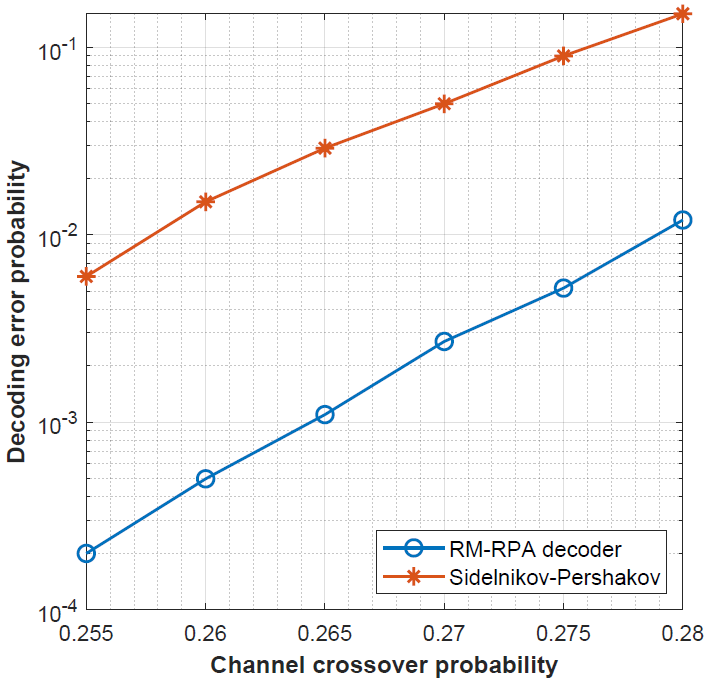}
\caption{$\cR\cM(9,2)$ over BSC}
\end{subfigure}
~\hspace*{0.1in}
\begin{subfigure}{0.3\linewidth}
\centering
\includegraphics[width=\linewidth]{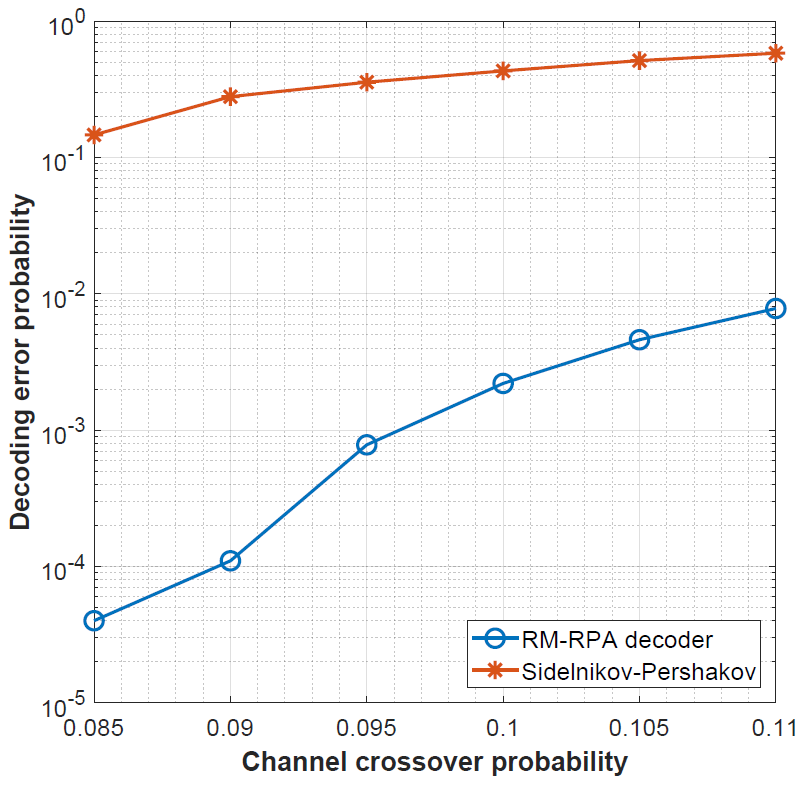}
\caption{$\cR\cM(8,3)$ over BSC}
\end{subfigure}
\caption{Comparison between the RPA algorithm and the algorithms in \cite{Sidel92,Loidreau04,Sakkour05} for decoding Reed-Muller codes over AWGN and BSC channels. The curve with legend ``Sakkour" is the performance of the algorithm in \cite{Loidreau04,Sakkour05}, and the curves with legend ``Sidelnikov-Pershakov" represent the performance of the algorithms in \cite{Sidel92}.}
\label{fig:jet}
\end{figure}

\begin{figure}
\centering
\begin{subfigure}{0.45\linewidth} 
\centering
\includegraphics[width=\linewidth]{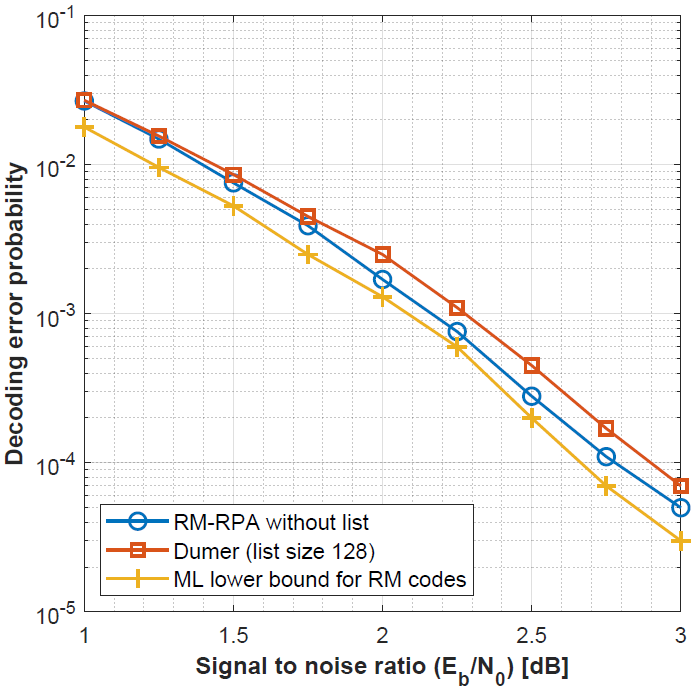}
\caption{$\cR\cM(8,2)$}
\end{subfigure}
~\hspace*{0.2in}
\begin{subfigure}{0.45\linewidth}
\centering
\includegraphics[width=\linewidth]{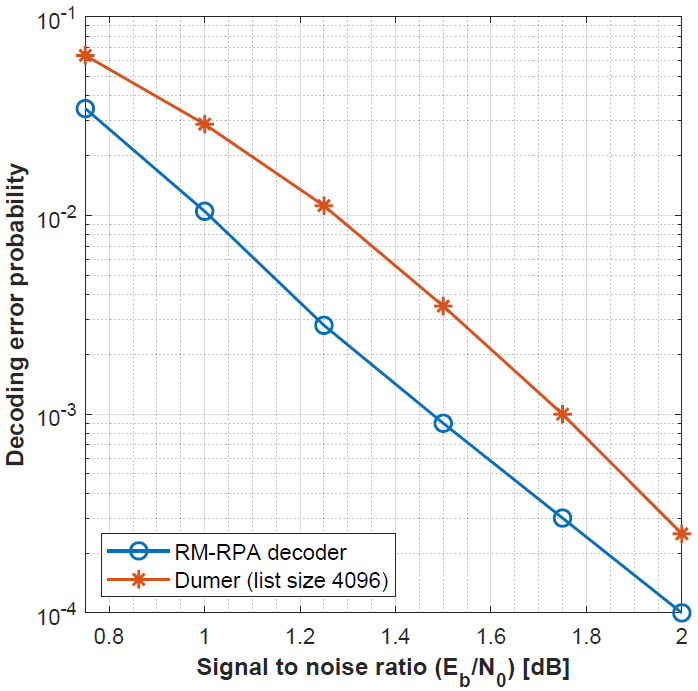}
\caption{$\cR\cM(9,3)$}
\end{subfigure}
\caption{Comparison between the RPA decoding algorithm {\bf without list} and Dumer's recursive list decoding algorithm (the algorithm described in Section III of \cite{Dumer06a}) for decoding Reed-Muller codes over AWGN channels.}
\label{fig:Dumer}
\end{figure}

\subsection{Parallelization and acceleration}  \label{sect:PandA}
Another important advantage of the new decoding algorithm for RM codes over the SCL decoder for polar codes is that our algorithm naturally allows parallel implementation while the SCL decoder is not parallelizable.
The key step in our algorithm for decoding a codeword of RM$(r,m)$ is to decode the quotient space codes which are in RM$(r-1,m-1)$ codes, and each of these can be decoded in parallel. 
Such a parallel structure is crucial to achieving high throughput and low latency.

Another way to accelerate the algorithm is to use only certain ``voting sets":
In the projection step, we can take a subset of one-dimensional subspaces instead of all the one-dimensional subspaces. Then we still use recursive decoding followed by the aggregation step. In this way, we decode fewer RM$(r-1,m-1)$ codes, and if the voting sets were chosen properly, we would obtain a similar decoding error probability with shorter running time.
Note that in Section~\ref{sect:highhigh} we already gave a concrete choice of voting set in Algorithm~\ref{alg:simp}, which indeed accelerates the decoding of high-rate RM codes with nearly-ML decoding error probability. At the same time, there might be other good voting sets to explore.

\subsection{Comparison with the meta converse bound for optimal codes \cite{PPV10,Polyanskiy10}}

We compared with upper bound from Corollary 39 and lower bound from Theorem 40 in \cite{Polyanskiy10}. More precisely, we provide the target error probability, the noise parameter of the channel, and the code dimension, then Corollary 39 and  Theorem 40 in \cite{Polyanskiy10} give us upper and lower bound on the (optimal) code length.
We found that $\cR\cM(8,2)$ is nearly optimal in terms of code length in the sense that the lower bound of code length given by \cite[Theorem 40]{Polyanskiy10} is 251, which differs from the actual code length of RM codes by only 5. Then $\cR\cM(9,2)$ is also close to optimal, where the lower bound on code length is 500. However, for RM codes with larger order (dimension) and larger code length, the lower bound differs from the actual code length by at least $50$, e.g., for $\cR\cM(9,3)$, the lower bound becomes 464.

\subsection{Optimal scaling and sharp threshold of Reed-Muller codes over BSC channels}
Recently, Hassani et al. gave theoretical results backing the conjecture that RM codes have an almost optimal scaling-law over BSC channels under ML decoding \cite{Hassani18}, where optimal scaling-law means that for a fixed linear code, the decoding error probability of ML decoder transitions from $0$ to $1$ as a function of the crossover probability of the BSC channel in the sharpest manner (i.e., comparable to random codes).
In particular, this implies that RM codes have sharper transition than polar codes under ML decoding (if capacity achieving).
In this section we give simulation results that show that for BSC channels, Reed-Muller codes under the RPA decoder also have sharper transition than polar codes under SCL+list decoder.

In Figure~\ref{fig:trans}, we plot the decoding error probability of RM codes and polar codes over BSC channels as a function of the channel crossover probability, where for RM codes we use the RPA decoder in Algorithm~\ref{alg:highlvl}, and for polar codes we use SCL decoder with list size $32$.
We can see that in all $4$ cases, the transition in the curve of RM codes is sharper than the transition in the curve of polar codes.
To further quantify the transition width, we introduce the following common notation:
Let us denote the channel crossover probability as $\epsilon$. For a given code and a corresponding decoding algorithm, we write its decoding error probability over BSC$(\epsilon)$ as $P_e(\epsilon)$.
For $0<\delta<1/2$, we define the transition width\footnote{Typically $P_e(\epsilon)$ is an increasing function of $\epsilon$, so the inverse function exists.}
$$
w(\delta):= P_e^{-1}(1-\delta)  -  P_e^{-1}(\delta).
$$
Clearly, $w(\delta)$ is a decreasing function. For a fixed value of $\delta$, smaller $w(\delta)$ means sharper transition and better scaling of the code and the corresponding decoder.

In Figure~\ref{fig:width}, we compare $w(0.1)$ and $w(0.01)$ between RM codes and polar codes with the same parameters, where we use the same decoders as above. We can see that RM codes always have smaller transition width than polar codes.
Moreover, within the same code family, the transition width $w(0.1)$ and $w(0.01)$ both decrease with the code length, meaning that the transition becomes sharper as the code length increases. This phenomena has already been proved for ML decoders in \cite{Tillich00} and \cite{Hassani18}.

\begin{figure}
\centering
\begin{subfigure}{0.47\linewidth} 
\centering
\includegraphics[width=\linewidth]{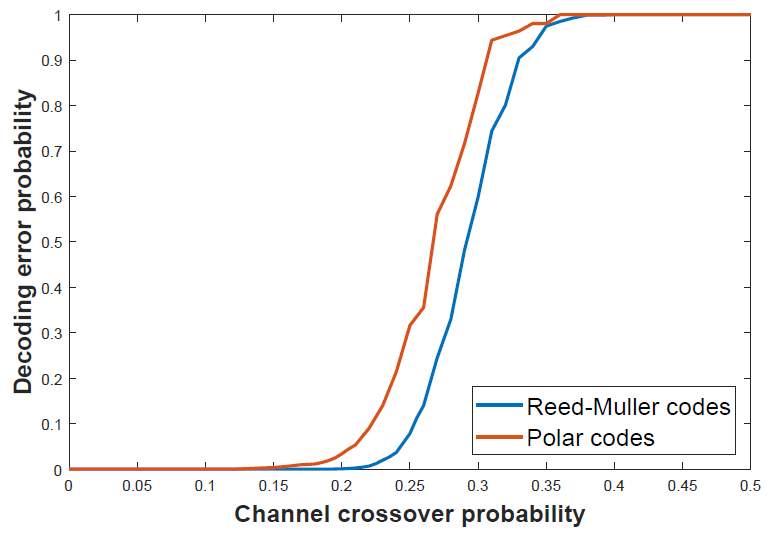}
\caption{$\cR\cM(8,2)$ vs Polar codes with the same parameters}
\end{subfigure}
~\hspace*{0.1in}
\begin{subfigure}{0.47\linewidth} 
\centering
\includegraphics[width=\linewidth]{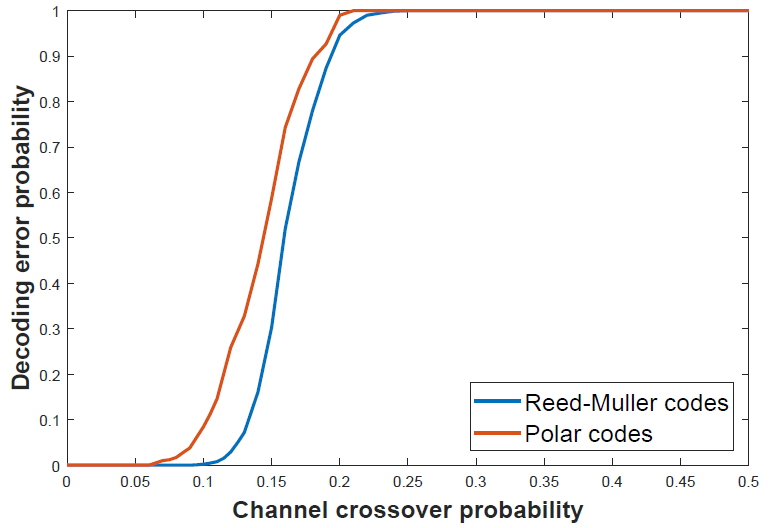}
\caption{$\cR\cM(8,3)$ vs Polar codes with the same parameters}
\end{subfigure}

\vspace*{0.1in}
\begin{subfigure}{0.47\linewidth} 
\centering
\includegraphics[width=\linewidth]{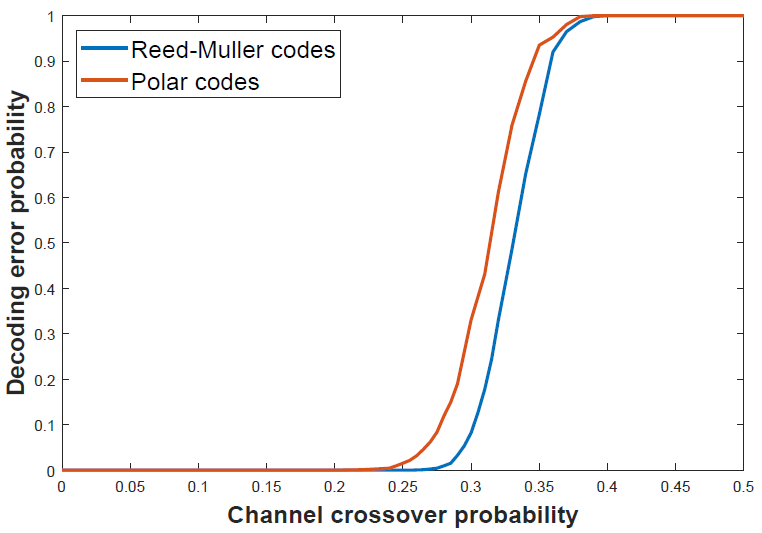}
\caption{$\cR\cM(9,2)$ vs Polar codes with the same parameters}
\end{subfigure}
~\hspace*{0.1in}
\begin{subfigure}{0.47\linewidth} 
\centering
\includegraphics[width=\linewidth]{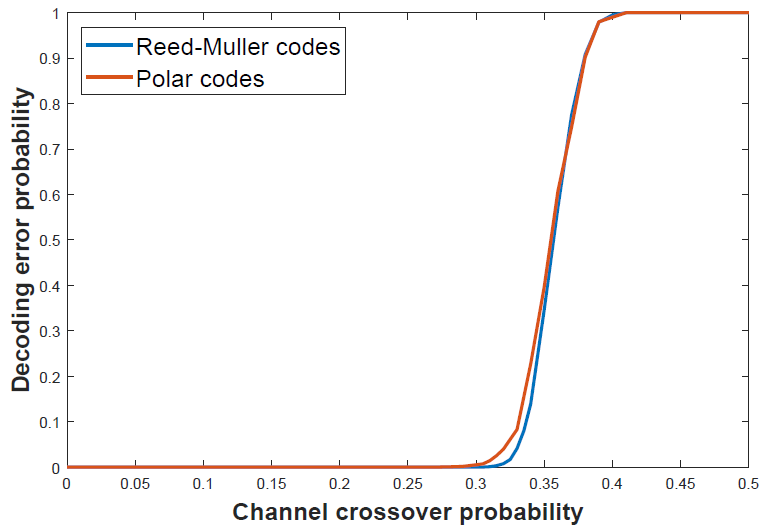}
\caption{$\cR\cM(10,2)$ vs Polar codes with the same parameters}
\end{subfigure}
\caption{Decoding error probability over BSC channels as a function of the channel crossover probability}
\label{fig:trans}
\end{figure}

\begin{figure}
\centering
\begin{subfigure}{0.47\linewidth} 
\centering
\includegraphics[width=\linewidth]{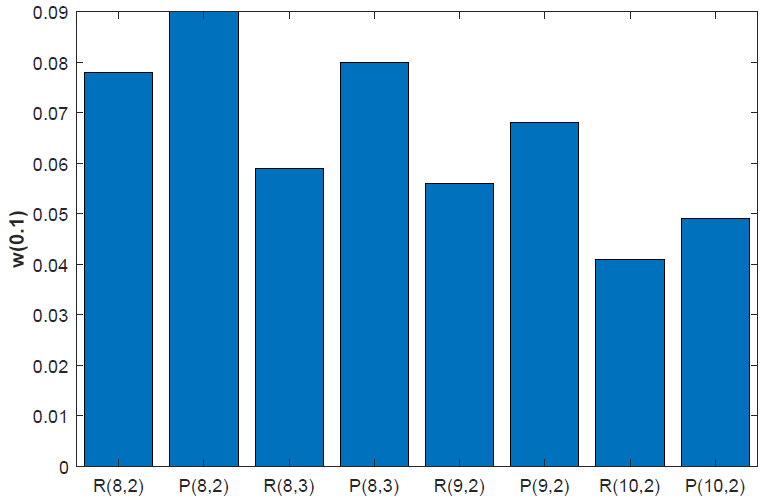}
\end{subfigure}
~\hspace*{0.1in}
\begin{subfigure}{0.47\linewidth} 
\centering
\includegraphics[width=\linewidth]{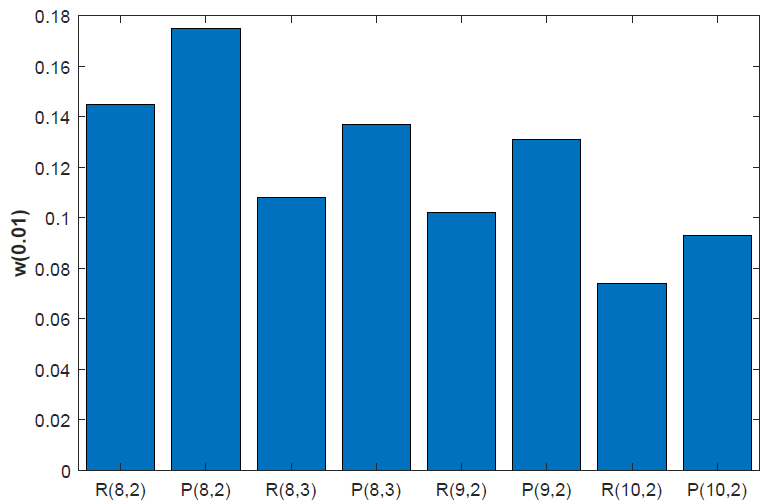}
\end{subfigure}
\caption{Comparison of transition width $w(0.1)$ and $w(0.01)$ between different codes. $R(m,r)$ refers to Reed-Muller codes, and $P(m,r)$ refers to polar codes with the same length and dimension as $R(m,r)$.}
\label{fig:width}
\end{figure}

\section{Extensions}
Here we mention a few possible extensions of the decoding algorithms.

1. The ``voting sets" idea to further accelerate the RPA decoding, as employed in Section~\ref{sect:highhigh} and discussed in Section~\ref{sect:PandA}.

2. Our new algorithms make use of one-dimensional subspace reduction. In practice, we can change the $\mathbb{B}_1,\dots,\mathbb{B}_{n-1}$ in the RPA decoding algorithms to any of the $s$-dimensional subspaces, with different combinations possible.
Note that in Section~\ref{sect:highhigh}, we already made use of this idea, where we chose $s=2$.

3. The RPA decoding algorithms can also be used to decode other codes that are supported on a vector space, or any code that has a well-defined notion of ``code projection'' that can be iteratively applied to produce eventually a trivial code (that can be decoded efficiently). In the case of RM codes, the quotient space projection has the specificity of producing again RM codes, and the trivial code is the Hadamard code that can be decoded using the FHT.

4. As discussed in Section~\ref{sect:spec}, we can use spectral decompositions or other relaxations in the Aggregation step instead of the majority voting, and depending on the regimes, one may take multiple iteration of the power-iteration method.

\section*{Acknowledgment} 
We thank Alexander Barg and Ilya Dumer for pointing out several references and giving useful feedback. We also thank Kirill Ivanov for useful discussions and feedback.

\bibliographystyle{IEEEtran}
\bibliography{decodeRM}

\begin{thebibliography}{10}
\providecommand{\url}[1]{#1}
\csname url@samestyle\endcsname
\providecommand{\newblock}{\relax}
\providecommand{\bibinfo}[2]{#2}
\providecommand{\BIBentrySTDinterwordspacing}{\spaceskip=0pt\relax}
\providecommand{\BIBentryALTinterwordstretchfactor}{4}
\providecommand{\BIBentryALTinterwordspacing}{\spaceskip=\fontdimen2\font plus
\BIBentryALTinterwordstretchfactor\fontdimen3\font minus
  \fontdimen4\font\relax}
\providecommand{\BIBforeignlanguage}[2]{{%
\expandafter\ifx\csname l@#1\endcsname\relax
\typeout{** WARNING: IEEEtran.bst: No hyphenation pattern has been}%
\typeout{** loaded for the language `#1'. Using the pattern for}%
\typeout{** the default language instead.}%
\else
\language=\csname l@#1\endcsname
\fi
#2}}
\providecommand{\BIBdecl}{\relax}
\BIBdecl

\bibitem{Ye19}
M.~Ye and E.~Abbe, ``Recursive projection-aggregation decoding of
  {R}eed-{M}uller codes,'' in \emph{2019 IEEE International Symposium on
  Information Theory (ISIT)}.\hskip 1em plus 0.5em minus 0.4em\relax IEEE,
  2019, pp. 2064--2068.

\bibitem{Reed54}
I.~Reed, ``A class of multiple-error-correcting codes and the decoding
  scheme,'' \emph{Transactions of the IRE Professional Group on Information
  Theory}, vol.~4, no.~4, pp. 38--49, 1954.

\bibitem{Arikan09}
E.~Ar{\i}kan, ``Channel polarization: {A} method for constructing
  capacity-achieving codes for symmetric binary-input memoryless channels,''
  \emph{IEEE Transactions on Information Theory}, vol.~55, no.~7, pp.
  3051--3073, 2009.

\bibitem{Arikan08}
------, ``A performance comparison of polar codes and {R}eed-{M}uller codes,''
  \emph{IEEE Communications Letters}, vol.~12, no.~6, 2008.

\bibitem{Mondelli14}
M.~Mondelli, S.~H. Hassani, and R.~L. Urbanke, ``From polar to {R}eed-{M}uller
  codes: {A} technique to improve the finite-length performance,'' \emph{IEEE
  Transactions on Communications}, vol.~62, no.~9, pp. 3084--3091, 2014.

\bibitem{Kudekar17}
S.~Kudekar, S.~Kumar, M.~Mondelli, H.~D. Pfister, E.~{\c{S}}a{\c{s}}oǧlu, and
  R.~Urbanke, ``Reed--{M}uller codes achieve capacity on erasure channels,''
  \emph{IEEE Transactions on Information Theory}, vol.~63, no.~7, pp.
  4298--4316, 2017.

\bibitem{Abbe15}
E.~Abbe, A.~Shpilka, and A.~Wigderson, ``Reed--{M}uller codes for random
  erasures and errors,'' \emph{IEEE Transactions on Information Theory},
  vol.~61, no.~10, pp. 5229--5252, 2015.

\bibitem{AY18}
E.~Abbe and M.~Ye, ``Reed-{M}uller codes polarize,'' in \emph{2019 IEEE 60th
  Annual Symposium on Foundations of Computer Science (FOCS)}.\hskip 1em plus
  0.5em minus 0.4em\relax IEEE, 2019, pp. 273--286.

\bibitem{ASY20}
E.~Abbe, A.~Shpilka, and M.~Ye, ``Reed-{M}uller codes: {T}heory and
  algorithms,'' 2020, arXiv:2002.03317.

\bibitem{Macwilliams77}
F.~J. MacWilliams and N.~J.~A. Sloane, \emph{The theory of error-correcting
  codes}.\hskip 1em plus 0.5em minus 0.4em\relax Elsevier, 1977.

\bibitem{Sidel92}
V.~M. Sidel'nikov and A.~S. Pershakov, ``Decoding of {R}eed-{M}uller codes with
  a large number of errors,'' \emph{Problemy peredachi informatsii}, vol.~28,
  no.~3, pp. 80--94, 1992.

\bibitem{Loidreau04}
P.~Loidreau and B.~Sakkour, ``Modified version of {S}idel'nikov-{P}ershakov
  decoding algorithm for binary second order {R}eed-{M}uller codes,'' in
  \emph{Proc. Ninth International Workshop on Algebraic and Combinatorial
  Coding theory, ACCT-9. Kranevo}, 2004, pp. 266--271.

\bibitem{Sakkour05}
B.~Sakkour, ``Decoding of second order {R}eed-{M}uller codes with a large
  number of errors,'' in \emph{IEEE Information Theory Workshop, 2005.}\hskip
  1em plus 0.5em minus 0.4em\relax IEEE, 2005, pp. 176--178.

\bibitem{Dumer04}
I.~Dumer, ``Recursive decoding and its performance for low-rate {R}eed-{M}uller
  codes,'' \emph{IEEE Transactions on Information Theory}, vol.~50, no.~5, pp.
  811--823, 2004.

\bibitem{Dumer06}
------, ``Soft-decision decoding of {R}eed-{M}uller codes: {A} simplified
  algorithm,'' \emph{IEEE transactions on information theory}, vol.~52, no.~3,
  pp. 954--963, 2006.

\bibitem{Dumer06a}
I.~Dumer and K.~Shabunov, ``Soft-decision decoding of {R}eed-{M}uller codes:
  {R}ecursive lists,'' \emph{IEEE Transactions on information theory}, vol.~52,
  no.~3, pp. 1260--1266, 2006.

\bibitem{Saptharishi17}
R.~Saptharishi, A.~Shpilka, and B.~L. Volk, ``Efficiently decoding
  {R}eed--{M}uller codes from random errors,'' \emph{IEEE Transactions on
  Information Theory}, vol.~63, no.~4, pp. 1954--1960, 2017.

\bibitem{Sberlo18}
O.~Sberlo and A.~Shpilka, ``On the performance of {R}eed-{M}uller codes with
  respect to random errors and erasures,'' in \emph{Proceedings of the
  Fourteenth Annual ACM-SIAM Symposium on Discrete Algorithms}.\hskip 1em plus
  0.5em minus 0.4em\relax SIAM, 2020, pp. 1357--1376.

\bibitem{Santi18}
E.~Santi, C.~H{\"a}ger, and H.~D. Pfister, ``Decoding {R}eed-{M}uller codes
  using minimum-weight parity checks,'' in \emph{2018 IEEE International
  Symposium on Information Theory (ISIT)}.\hskip 1em plus 0.5em minus
  0.4em\relax IEEE, 2018, pp. 1296--1300.

\bibitem{Tal15}
I.~Tal and A.~Vardy, ``List decoding of polar codes,'' \emph{IEEE Transactions
  on Information Theory}, vol.~61, no.~5, pp. 2213--2226, 2015.

\bibitem{Chase72}
D.~Chase, ``Class of algorithms for decoding block codes with channel
  measurement information,'' \emph{IEEE Transactions on Information Theory},
  vol.~18, no.~1, pp. 170--182, 1972.

\bibitem{3gpp}
``Final report of 3{GPP TSG RAN WG}1 \#87 v1.0.0,''
  http://www.3gpp.org/ftp/tsg\_ran/WG1\_RL1/TSGR1\_87/Report/.

\bibitem{Hassani18}
H.~Hassani, S.~Kudekar, O.~Ordentlich, Y.~Polyanskiy, and R.~Urbanke, ``Almost
  optimal scaling of {R}eed-{M}uller codes on {BEC} and {BSC} channels,'' in
  \emph{2018 IEEE International Symposium on Information Theory (ISIT)}.\hskip
  1em plus 0.5em minus 0.4em\relax IEEE, 2018, pp. 311--315.

\bibitem{Green66}
R.~R. Green, ``A serial orthogonal decoder,'' \emph{JPL Space Programs
  Summary}, vol.~37, pp. 247--253, 1966.

\bibitem{Be86}
Y.~Be'ery and J.~Snyders, ``Optimal soft decision block decoders based on fast
  {H}adamard transform,'' \emph{IEEE transactions on information theory},
  vol.~32, no.~3, pp. 355--364, 1986.

\bibitem{Balatsoukas15}
A.~Balatsoukas-Stimming, M.~B. Parizi, and A.~Burg, ``{LLR}-based successive
  cancellation list decoding of polar codes,'' \emph{IEEE transactions on
  signal processing}, vol.~63, no.~19, pp. 5165--5179, 2015.

\bibitem{Sarkis15}
G.~Sarkis, P.~Giard, A.~Vardy, C.~Thibeault, and W.~J. Gross, ``Fast list
  decoders for polar codes,'' \emph{IEEE Journal on Selected Areas in
  Communications}, vol.~34, no.~2, pp. 318--328, 2015.

\bibitem{Hashemi18}
S.~A. Hashemi, N.~Doan, M.~Mondelli, and W.~J. Gross, ``Decoding {Reed-Muller}
  and polar codes by successive factor graph permutations,'' in \emph{2018 IEEE
  10th International Symposium on Turbo Codes \& Iterative Information
  Processing (ISTC)}.\hskip 1em plus 0.5em minus 0.4em\relax IEEE, 2018, pp.
  1--5.

\bibitem{PPV10}
Y.~Polyanskiy, H.~V. Poor, and S.~Verd{\'u}, ``Channel coding rate in the
  finite blocklength regime,'' \emph{IEEE Transactions on Information Theory},
  vol.~56, no.~5, p. 2307, 2010.

\bibitem{Polyanskiy10}
Y.~Polyanskiy, \emph{Channel coding: non-asymptotic fundamental limits}.\hskip
  1em plus 0.5em minus 0.4em\relax Princeton University, 2010.

\bibitem{Tillich00}
J.~P. Tillich and G.~Z{\'e}mor, ``Discrete isoperimetric inequalities and the
  probability of a decoding error,'' \emph{Combinatorics, Probability and
  Computing}, vol.~9, no.~5, pp. 465--479, 2000.

\end{thebibliography}

\appendices
\section{Proof of Lemma~\ref{lm:tbs}}  \label{ap:tbs}

Let $\mathbi{b}_1,\mathbi{b}_2,\dots,\mathbi{b}_m$ be a basis of $\mathbb{E}$ over $\mathbb{F}_2$ such that the first $s$ vectors $\mathbi{b}_1,\mathbi{b}_2,\dots,\mathbi{b}_s$ form a basis of $\mathbb{B}$.
Let $\mathbi{e}_1,\mathbi{e}_2,\dots,\mathbi{e}_m$ be the standard basis of $\mathbb{E}$, i.e., all but the $i$-th coordinate of $\mathbi{e}_i$ are $0$.
Then there is an $m\times m$ invertible matrix $M$ such that 
$$
(\mathbi{b}_1,\mathbi{b}_2,\dots,\mathbi{b}_m)^T=M (\mathbi{e}_1,\mathbi{e}_2,\dots,\mathbi{e}_m)^T.
$$
Let $(z_1,z_2,\dots,z_m)$ be the coordinates of a point in $\mathbb{E}$ under the standard basis
$(\mathbi{e}_1,\mathbi{e}_2,\dots,\mathbi{e}_m)$, and let $(z_1',z_2',\dots,z_m')$ be the coordinates of the same point under the basis $(\mathbi{b}_1,\mathbi{b}_2,\dots,\mathbi{b}_m)$. Then
$$
(z_1',z_2',\dots,z_m')=(z_1,z_2,\dots,z_m) M^{-1}.
$$
Notice that $\mathbb{B}=\{\mathbi{z}:(z_1',z_2',\dots,z_s')\in\mathbb{F}_2^s,z_{s+1}'=z_{s+2}'=\dots=z_m'=0\}$. Therefore for every coset $T\in\mathbb{E}/\mathbb{B}$, the last $m-s$ coordinates under the basis $(\mathbi{b}_1,\mathbi{b}_2,\dots,\mathbi{b}_m)$ are the same for all the points in $T$. As a result, we can use binary vectors of length $m-s$ to label the cosets, i.e.,
$$
[a_1,a_2,\dots,a_{m-s}]:=\{\mathbi{z}:(z_1',z_2',\dots,z_s')\in\mathbb{F}_2^s,z_{s+1}'=a_1,z_{s+2}'=a_2,\dots,z_m'=a_{m-s}\}.
$$
Next we associate every subset $A\subseteq [m]$ with another row vector $\mathbi{v}_m'(A)$ of length $2^m$, whose components are indexed by $\mathbi{z}=(z_1,z_2,\dots,z_m) \in \mathbb{E}$.
The vector $\mathbi{v}_m'(A)$ is defined as follows:
$$
\mathbi{v}_m'(A,\mathbi{z}) = \prod_{i\in A} z_i',
$$
where $\mathbi{v}_m'(A,\mathbi{z})$ is the component of $\mathbi{v}_m'(A)$ indexed by $\mathbi{z}$,
i.e., $\mathbi{v}_m'(A,\mathbi{z})$ is the evaluation of the polynomial $\prod_{i\in A}Z_i'$ at $\mathbi{z}$, where $(Z_1',Z_2',\dots,Z_m')=(Z_1,Z_2,\dots,Z_m) M^{-1}$.
Since all the invertible linear transforms belong to the automorphism group of Reed-Muller codes \cite{Macwilliams77}, we have the following alternative characterization of RM codes
$$
\cR\cM(m,r) := \left\{\sum_{A\subseteq[m],|A|\le r}u'(A) \mathbi{v}_m'(A): u'(A)\in\{0,1\} 
\text{~~for all~} A\subseteq[m],|A|\le r\right\}.
$$

It is easy to check that
for every coset $T=[z_{s+1}',z_{s+2}',\dots,z_m']\in \mathbb{E}/\mathbb{B}$, if $[s]\subseteq A$ then $\sum_{\mathbi{z}\in T} \mathbi{v}_m'(A,\mathbi{z}) = \prod_{i\in (A\setminus[s])} z_i'$, and if $[s]\nsubseteq A$ then $\sum_{\mathbi{z}\in T} \mathbi{v}_m'(A,\mathbi{z}) =0$.
Now let $c$ be a codeword of $\cR\cM(m,r)$, then it can be written as $c=\sum_{A\subseteq[m],|A|\le r}u'(A) \mathbi{v}_m'(A)$,
and for every coset $T=[z_{s+1}',z_{s+2}',\dots,z_m']\in \mathbb{E}/\mathbb{B}$, we have 
$$
\sum_{\mathbi{z}\in T} c(\mathbi{z})  =
\sum_{A\supseteq[s],|A|\le r}u'(A) \prod_{i\in (A\setminus[s])} z_i'         
 = \sum_{A\subseteq([m]\setminus[s]) ,|A|\le r-s}u'(A) \prod_{i\in A} z_i' .
$$
Therefore every codeword in $\cQ(m,r,\mathbb{B})$ corresponds to an $(m-s)$-variate polynomial in $\mathbb{F}_2[Z_{s+1}',Z_{s+2}',\dots,Z_m']$ with degree at most $r-s$, and this is exactly the definition of the $(r-s)$-th order Reed-Muller code $\cR\cM(m-s,r-s)$.

\section{Proof of Proposition~\ref{prop:eqde}}  \label{ap:eqde}
We need the following technical lemma to prove Proposition~\ref{prop:eqde}.
\begin{lemma}  \label{lm:ll}
Let $c_0=(c_0(\mathbi{z}), \mathbi{z}\in\mathbb{E})$ be a codeword of $\cR\cM(m,r)$. Let $L^{(1)}=(L^{(1)}(\mathbi{z}), \mathbi{z}\in\mathbb{E})$ and $L^{(2)}=(L^{(2)}(\mathbi{z}), \mathbi{z}\in\mathbb{E})$ be two LLR vectors such that 
\begin{equation} \label{eq:cd1}
L^{(2)}(\mathbi{z})=(-1)^{c_0(\mathbi{z})}L^{(1)}(\mathbi{z})  \quad\quad \forall  \mathbi{z}\in\mathbb{E}.
\end{equation}
Denote $\hat{c}_1=\texttt{RPA\_RM}(L^{(1)},m,r,N_{\max},\theta)$ and $\hat{c}_2=\texttt{RPA\_RM}(L^{(2)},m,r,N_{\max},\theta)$.
Then $\hat{c}_1=\hat{c}_2+c_0$.
\end{lemma}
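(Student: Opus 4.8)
The plan is to prove the lemma by induction on the order $r$, by showing that \emph{every} stage of the \texttt{RPA\_RM} function is equivariant under the per-coordinate sign flip $L \mapsto (-1)^{c_0} L$ induced by a codeword. Concretely, I would maintain throughout the algorithm the invariant that the two running LLR vectors satisfy $L^{(2)}(\mathbi{z}) = (-1)^{c_0(\mathbi{z})} L^{(1)}(\mathbi{z})$ for all $\mathbi{z}\in\mathbb{E}$, with $c_0\in\cR\cM(m,r)$, and then read off the conclusion from the final hard-decision rule $\hat{c}(\mathbi{z}) = \mathbbm{1}[L(\mathbi{z})<0]$.

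For the base case $r=1$ the decoder is the FHT-based ML decoder, which outputs the $c\in\cR\cM(m,1)$ maximizing $\sum_{\mathbi{z}\in\mathbb{E}} (-1)^{c(\mathbi{z})} L(\mathbi{z})$, cf.\ \eqref{eq:mxr}. Replacing $L^{(1)}$ by $L^{(2)}=(-1)^{c_0}L^{(1)}$ turns the objective $f_1(c)$ into $f_1(c\oplus c_0)$; since $c_0\in\cR\cM(m,1)$ and the code is linear, $c\mapsto c\oplus c_0$ permutes the codebook, so the maximizers obey $\hat{c}_1 = \hat{c}_2\oplus c_0$. A fixed tie-breaking rule makes this hold exactly, and exact ties occur only on a non-generic set of LLRs.

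For the inductive step the key point --- and the main technical obstacle --- is that the projection \eqref{eq:lt} is equivariant, since this is what lets me invoke the induction hypothesis with a genuine lower-order codeword. For a one-dimensional $\mathbb{B}_i$ every coset is a pair $T=\{\mathbi{z},\mathbi{z}\oplus\mathbi{z}_i\}$, so writing $g(a,b):=\ln(e^{a+b}+1)-\ln(e^{a}+e^{b})$, a short check over the four sign patterns $(\alpha,\beta)\in\mathbb{F}_2^2$ establishes the identity $g((-1)^{\alpha}a,(-1)^{\beta}b)=(-1)^{\alpha\oplus\beta}g(a,b)$. This yields $L^{(2)}_{/\mathbb{B}_i}(T)=(-1)^{(c_0)_{/\mathbb{B}_i}(T)}L^{(1)}_{/\mathbb{B}_i}(T)$, where $(c_0)_{/\mathbb{B}_i}=\Proj(c_0,\mathbb{B}_i)$ is a codeword of $\cR\cM(m-1,r-1)$ by Lemma~\ref{lm:tbs}. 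Applying the induction hypothesis to each of the $n-1$ projected LLR vectors then gives $\hat{y}^{(2)}_{/\mathbb{B}_i}=\hat{y}^{(1)}_{/\mathbb{B}_i}\oplus(c_0)_{/\mathbb{B}_i}$.

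It remains to check that \texttt{Aggregation} restores the invariant. Using $1-2b=(-1)^{b}$ and $(-1)^{a\oplus b}=(-1)^{a}(-1)^{b}$, the $i$-th summand of $\texttt{cumuLLR}^{(2)}(\mathbi{z})$ equals that of $\texttt{cumuLLR}^{(1)}(\mathbi{z})$ multiplied by $(-1)^{c_0(\mathbi{z})}(-1)^{c_0(\mathbi{z}\oplus\mathbi{z}_i)}(-1)^{c_0(\mathbi{z}\oplus\mathbi{z}_i)}=(-1)^{c_0(\mathbi{z})}$: the factor $(-1)^{c_0(\mathbi{z}\oplus\mathbi{z}_i)}$ coming from the sign of $L^{(2)}(\mathbi{z}\oplus\mathbi{z}_i)$ cancels the one coming from the flipped decoded bit $(c_0)_{/\mathbb{B}_i}([\mathbi{z}+\mathbb{B}_i])=c_0(\mathbi{z})\oplus c_0(\mathbi{z}\oplus\mathbi{z}_i)$. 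Summing over $i$ gives $\hat{L}^{(2)}(\mathbi{z})=(-1)^{c_0(\mathbi{z})}\hat{L}^{(1)}(\mathbi{z})$, so the invariant survives the update $L\gets\hat{L}$. Since the stopping test $|\hat{L}(\mathbi{z})-L(\mathbi{z})|\le\theta|L(\mathbi{z})|$ is itself invariant under these global per-coordinate sign flips, both runs execute the same number of iterations and terminate with LLR vectors still related by $(-1)^{c_0}$. Feeding this into $\hat{c}(\mathbi{z})=\mathbbm{1}[L(\mathbi{z})<0]$ yields $\hat{c}_2=\hat{c}_1\oplus c_0$, i.e.\ $\hat{c}_1=\hat{c}_2+c_0$ over $\mathbb{F}_2$. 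The only delicate points are the tie conventions at $L(\mathbi{z})=0$ in the base-case ML decoder and in the final hard decision; both are non-generic and handled by fixing a consistent rule.
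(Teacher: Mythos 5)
Your proof is correct and follows essentially the same route as the paper's: induction on $r$, with the base case handled via the FHT/ML objective \eqref{eq:mxr} and linearity of $\cR\cM(m,1)$, the projection equivariance $L^{(2)}_{/\mathbb{B}}(T)=(-1)^{(c_0)_{/\mathbb{B}}(T)}L^{(1)}_{/\mathbb{B}}(T)$ combined with Lemma~\ref{lm:tbs} to invoke the induction hypothesis, and the sign cancellation in \texttt{Aggregation} giving $\hat{L}^{(2)}=(-1)^{c_0}\hat{L}^{(1)}$. Your explicit observations that the stopping test is invariant under the sign flip (so both runs execute the same iterations) and that ties must be broken consistently are minor points the paper leaves implicit, but the argument is the same.
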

\begin{proof}
We prove by induction on $r$.
For the base case $r=1$, we use the ML decoder as described at the beginning of this section. More precisely, according to \eqref{eq:mxr}, $\hat{c}_2=\texttt{RPA\_RM}(L^{(2)},m,1,N_{\max},\theta)$ is the codeword in $\cR\cM(m,1)$ that maximizes
$$
\sum_{\mathbi{z}\in\mathbb{E}} \Big( (-1)^{c(\mathbi{z})} L^{(2)}(\mathbi{z}) \Big),
$$
i.e.,
$$
\sum_{\mathbi{z}\in\mathbb{E}} \Big( (-1)^{\hat{c}_2(\mathbi{z})} L^{(2)}(\mathbi{z}) \Big)
\ge \sum_{\mathbi{z}\in\mathbb{E}} \Big( (-1)^{c(\mathbi{z})} L^{(2)}(\mathbi{z}) \Big)
\quad\quad \forall c\in\cR\cM(m,1).
$$
By \eqref{eq:cd1}, we have
$$
\sum_{\mathbi{z}\in\mathbb{E}} \Big( (-1)^{\hat{c}_2(\mathbi{z}) \oplus c_0(\mathbi{z})} L^{(1)}(\mathbi{z}) \Big)
\ge \sum_{\mathbi{z}\in\mathbb{E}} \Big( (-1)^{c(\mathbi{z}) \oplus c_0(\mathbi{z})} L^{(1)}(\mathbi{z}) \Big)
\quad\quad \forall c\in\cR\cM(m,1).
$$
Since $c_0$ is a codeword of $\cR\cM(m,1)$, we have:
$c_0+\cR\cM(m,1)=\cR\cM(m,1)$. As a result,
$$
\sum_{\mathbi{z}\in\mathbb{E}} \Big( (-1)^{\hat{c}_2(\mathbi{z}) \oplus c_0(\mathbi{z})} L^{(1)}(\mathbi{z}) \Big)
\ge \sum_{\mathbi{z}\in\mathbb{E}} \Big( (-1)^{c(\mathbi{z})} L^{(1)}(\mathbi{z}) \Big)
\quad\quad \forall c\in\cR\cM(m,1).
$$
Therefore, $\hat{c}_2\oplus c_0$ is the codeword in $\cR\cM(m,1)$ that maximizes
$$
\sum_{\mathbi{z}\in\mathbb{E}} \Big( (-1)^{c(\mathbi{z})} L^{(1)}(\mathbi{z}) \Big).
$$
Thus we conclude that $\hat{c}_1=\hat{c}_2\oplus c_0$. This establishes the base case.

For the inductive step, let us assume that the lemma holds for $r-1$ and prove it for $r$. Notice that in Algorithm~\ref{alg:genhighlvl}, $\hat{c}(\mathbi{z})$ is simply determined by the sign of $L(\mathbi{z})$. It is easy to see that if in Algorithm~\ref{alg:genAg}, the updated LLR vectors $\hat{L}^{(1)}$ and $\hat{L}^{(2)}$ always satisfy \eqref{eq:cd1}, then $\hat{c}_1=\hat{c}_2\oplus c_0$.
Therefore, we only need to prove \eqref{eq:cd1} for the updated LLR vectors $\hat{L}^{(1)}$ and $\hat{L}^{(2)}$.

Assuming that $L^{(1)}$ and $L^{(2)}$ satisfy \eqref{eq:cd1}, our task is to show that 
$\hat{L}^{(2)}(\mathbi{z})=(-1)^{c_0(\mathbi{z})} \hat{L}^{(1)}(\mathbi{z})$ for all $\mathbi{z}\in\mathbb{E}$.
From the analysis in Section~\ref{sect:gen}, we know that 
\begin{equation} \label{eq:yz}
\hat{L}^{(i)}(\mathbi{z})=\frac{1}{n-1}\sum_{\mathbi{z}'\neq \mathbi{z}} \alpha_i(\mathbi{z},\mathbi{z}')L^{(i)}(\mathbi{z}')
\text{~~for~} i=1,2.
\end{equation}
The coefficient $\alpha_i(\mathbi{z},\mathbi{z}')$ is $1$ if the decoding result of the corresponding $(r-1)$th order RM code at the coset $\{\mathbi{z},\mathbi{z}'\}$ is $0$, and $\alpha_i(\mathbi{z},\mathbi{z}')$ is $-1$ if the decoding result at the coset $\{\mathbi{z},\mathbi{z}'\}$ is $1$ (see line 3 of Algorithm~\ref{alg:genAg}).

Next we will show that $\alpha_2(\mathbi{z},\mathbi{z}')=(-1)^{c_0(\mathbi{z})\oplus c_0(\mathbi{z}')} \alpha_1(\mathbi{z},\mathbi{z}')$.
Note that $\alpha_i(\mathbi{z},\mathbi{z}')$ is determined by the decoding result
$\hat{y}_{/\mathbb{B}}^{(i)} = \texttt{RPA\_RM}(L_{/\mathbb{B}}^{(i)},m-1,r-1,N_{\max}, \theta)$, where $\mathbb{B}=\{0,\mathbi{z}\oplus \mathbi{z}'\}$.
By \eqref{eq:lt}, we have
\begin{align*}
L_{/\mathbb{B}}^{(2)}(T) & =\ln \Big( \exp \big( \sum_{\mathbi{z}\in T} L^{(2)}(\mathbi{z}) \big) +1 \Big) - 
\ln \Big( \sum_{\mathbi{z}\in T} \exp(L^{(2)}(\mathbi{z})) \Big) \\
& =\ln \Big( \exp \big( \sum_{\mathbi{z}\in T} (-1)^{c_0(\mathbi{z})} L^{(1)}(\mathbi{z}) \big) +1 \Big) - \ln \Big( \sum_{\mathbi{z}\in T} \exp \big( (-1)^{c_0(\mathbi{z})} L^{(1)}(\mathbi{z}) \big) \Big) \\
& = (-1)^{\bigoplus_{\mathbi{z}\in T} c_0(\mathbi{z})} 
\left( \ln \Big( \exp \big( \sum_{\mathbi{z}\in T} L^{(1)}(\mathbi{z}) \big) +1 \Big) - 
\ln \Big( \sum_{\mathbi{z}\in T} \exp(L^{(1)}(\mathbi{z})) \Big) \right) \\
& = (-1)^{\bigoplus_{\mathbi{z}\in T} c_0(\mathbi{z})} L_{/\mathbb{B}}^{(1)}(T).
\end{align*}
Let us write $c_0(T):=\bigoplus_{\mathbi{z}\in T} c_0(\mathbi{z})$.
Then $L_{/\mathbb{B}}^{(2)}(T)=(-1)^{c_0(T)}L_{/\mathbb{B}}^{(1)}(T)$ for all $T\in\mathbb{E}/\mathbb{B}$. Moreover, since $c_0$ is a codeword of $\cR\cM(m,r)$ and $\mathbb{B}$ is a one-dimensional subspace of $\mathbb{E}$, by Lemma~\ref{lm:tbs} we know that $(c_0(T),T\in\mathbb{E}/\mathbb{B})$ is a codeword of $\cR\cM(m-1,r-1)$. Therefore, the codeword $(c_0(T),T\in\mathbb{E}/\mathbb{B})$ and the two LLR vectors $(L_{/\mathbb{B}}^{(1)}(T),T\in\mathbb{E}/\mathbb{B})$ and $(L_{/\mathbb{B}}^{(2)}(T),T\in\mathbb{E}/\mathbb{B})$ satisfy the conditions of this lemma. By the induction hypothesis,
$\hat{y}_{/\mathbb{B}}^{(2)}(T) = \hat{y}_{/\mathbb{B}}^{(1)}(T) \oplus c_0(T)$ for all $T\in\mathbb{E}/\mathbb{B}$.
As a result, we have $\alpha_2(\mathbi{z},\mathbi{z}')=(-1)^{c_0(\mathbi{z})\oplus c_0(\mathbi{z}')} \alpha_1(\mathbi{z},\mathbi{z}')$.
Taking this into \eqref{eq:yz}, we conclude that for all $\mathbi{z}\in\mathbb{E}$,
\begin{align*}
\hat{L}^{(2)}(\mathbi{z}) & =\frac{1}{n-1}\sum_{\mathbi{z}'\neq \mathbi{z}} \alpha_2(\mathbi{z},\mathbi{z}')L^{(2)}(\mathbi{z}') \\
& =\frac{1}{n-1}\sum_{\mathbi{z}'\neq \mathbi{z}} \Big( (-1)^{c_0(\mathbi{z})\oplus c_0(\mathbi{z}')} \alpha_1(\mathbi{z},\mathbi{z}')
(-1)^{c_0(\mathbi{z}')}L^{(1)}(\mathbi{z}') \Big) \\
& = (-1)^{c_0(\mathbi{z})} \frac{1}{n-1} \sum_{\mathbi{z}'\neq \mathbi{z}}  \alpha_1(\mathbi{z},\mathbi{z}')    L^{(1)}(\mathbi{z}') = (-1)^{c_0(\mathbi{z})} \hat{L}^{(1)}(\mathbi{z}).
\end{align*}
This completes the proof of the inductive step and establishes the lemma.

\end{proof}

\underline{\em Proof of Proposition~\ref{prop:eqde}:}
Since $W$ is a BMS channel, there is a permutation $\pi$ of the output alphabet $\cW$ satisfying the two conditions in Definition~\ref{def:bms}. Since both $c_1$ and $c_2$ are codewords of $\cR\cM(m,r)$, $c_0:=c_1+c_2$ is also a codeword of $\cR\cM(m,r)$.
Clearly, both channel output vectors $Y_1$ and $Y_2$ belong to $\cW^n$.
Now we define a permutation $\pi^{c_0}$ on $\cW^n$: For any $y=(y(\mathbi{z}),\mathbi{z}\in\mathbb{E}) \in \cW^n$, 
$$
\pi^{c_0}(y) := (\pi^{c_0(\mathbi{z})}(y(\mathbi{z})), \mathbi{z}\in\mathbb{E}).
$$
Notice that $c_0(\mathbi{z})$ is either $0$ or $1$, and $\pi^0$ is the identity map. Since $\pi$ is a permutation on $\cW$, $\pi^{c_0}$ is clearly a permutation on $\cW^n$.
For a given $y=(y(\mathbi{z}),\mathbi{z}\in\mathbb{E}) \in \cW^n$, we denote the LLR vector corresponding to $y$ as $L_{y}^{(1)}:=(L_{y}^{(1)}(\mathbi{z}),\mathbi{z}\in\mathbb{E})$, i.e.,
$L_{y}^{(1)}(\mathbi{z})=\LLR(y(\mathbi{z}))$ for all $\mathbi{z}\in\mathbb{E}$, and we denote the LLR vector corresponding to $\pi^{c_0}(y)$ as $L_y^{(2)}:=(L_y^{(2)}(\mathbi{z}),\mathbi{z}\in\mathbb{E})$, i.e.,
$L_y^{(2)}(\mathbi{z})=\LLR(\pi^{c_0(\mathbi{z})}(y(\mathbi{z})))$ for all $\mathbi{z}\in\mathbb{E}$.
By the property of $\pi$ (see Definition~\ref{def:bms}), we have
$$
L_y^{(2)}(\mathbi{z})=(-1)^{c_0(\mathbi{z})}L_y^{(1)}(\mathbi{z})  \quad\quad \forall  \mathbi{z}\in\mathbb{E}.
$$
Since $c_0\in\cR\cM(m,r)$, by Lemma~\ref{lm:ll} we know that
$$
\texttt{RPA\_RM}(L_y^{(1)},m,r,N_{\max},\theta) = \texttt{RPA\_RM}(L_y^{(2)},m,r,N_{\max},\theta) +c_0.
$$
As a result, $\texttt{RPA\_RM}(L_y^{(1)},m,r,N_{\max},\theta)\neq c_1$ if and only if $\texttt{RPA\_RM}(L_y^{(2)},m,r,N_{\max},\theta) \neq c_2$.

For a vector $y\in\cW^n$ and a codeword $c\in\cR\cM(m,r)$, we use $W^n(y|c)$ to denote the probability of outputting $y$ when the transmitted codeword is $c$. Again by the property of $\pi$, it is easy to see that
$$
W^n(y|c_1) = W^n(\pi^{c_0}(y) | c_2)            \quad\quad \forall  y\in\cW^n.
$$
Recall that in Proposition~\ref{prop:eqde}, we use $L^{(1)}$ and $L^{(2)}$ to denote the random LLR vectors corresponding to the random channel outputs when transmitting $c_1$ and $c_2$, respectively.
Therefore,
\begin{align*}
& \mathbb{P}(\texttt{RPA\_RM}(L^{(1)},m,r,N_{\max},\theta)\neq c_1) \\
= & \sum_{y\in\cW^n} W^n(y|c_1) \mathbbm{1}[\texttt{RPA\_RM}(L_y^{(1)},m,r,N_{\max},\theta)\neq c_1] \\
= & \sum_{y\in\cW^n} W^n(\pi^{c_0}(y) | c_2) \mathbbm{1}[\texttt{RPA\_RM}(L_y^{(2)},m,r,N_{\max},\theta) \neq c_2]  \\
= & \mathbb{P}(\texttt{RPA\_RM}(L^{(2)},m,r,N_{\max},\theta)\neq c_2).
\end{align*}
This completes the proof of Proposition~\ref{prop:eqde}.
 \hfill \qed

\clearpage

\section{Another version of Algorithm~\ref{alg:highlvl}--\ref{alg:BSCAg}}  \label{ap:newd}
\begin{algorithm}
\caption{The \texttt{RPA\_RM} decoding function for BSC}    \label{alg:newd}
\textbf{Input:} The corrupted codeword $y=(y(\mathbi{z}), \mathbi{z}\in \{0,1\}^m)$; the parameters of the Reed-Muller code $m$ and $r$; the maximal number of iterations $N_{\max}$

\textbf{Output:} The decoded codeword $\hat{c}$

\vspace*{0.05in}
\begin{algorithmic}[1]
\For {$i=1,2,\dots,N_{\max}$} 
\State Initialize $(\texttt{changevote}(\mathbi{z}), \mathbi{z}\in \{0,1\}^m)$ as an all-zero vector indexed by $\mathbi{z}\in \{0,1\}^m$
\For {each non-zero $\mathbi{z}_0 \in \{0,1\}^m$}
\State Set $\mathbb{B}=\{0,\mathbi{z}_0\}$
\State $\hat{y}_{/\mathbb{B}} \gets \texttt{RPA\_RM}(y_{/\mathbb{B}},m-1,r-1,N_{\max})$
\State \Comment{If $r=2$, then we use the Fast Hadamard Transform to decode the first-order RM code \cite{Macwilliams77}}
\For {each $\mathbi{z}\in \{0,1\}^m$}
\If {$y_{/\mathbb{B}}([\mathbi{z}+\mathbb{B}]) \neq \hat{y}_{/\mathbb{B}} ([\mathbi{z}+\mathbb{B}])$}
\State $\texttt{changevote}(\mathbi{z}) \gets \texttt{changevote}(\mathbi{z})+1$ \Comment{Here addition is between real numbers}
\EndIf
\EndFor
\EndFor
\State $\texttt{numofchange} \gets 0$
\State $n \gets 2^m$
\For {each $\mathbi{z}\in  \{0,1\}^m$}
\If {$\texttt{changevote}(\mathbi{z})>\frac{n-1}{2}$}
\State $y(\mathbi{z}) \gets y(\mathbi{z}) \oplus 1$ \Comment{Here addition is over $\mathbb{F}_2$}
\State $\texttt{numofchange} \gets \texttt{numofchange}+1$   \Comment{Here addition is between real numbers}
\EndIf
\EndFor
\If {$\texttt{numofchange} = 0$}
\State \textbf{break}  \Comment{Exit the first for loop of this function}
\EndIf
\EndFor
\State $\hat{c} \gets y$
\State \textbf{return} $\hat{c}$
\end{algorithmic}
\end{algorithm}

\clearpage

\section{Another version of Algorithms~\ref{alg:genhighlvl}--\ref{alg:genAg}}  \label{ap:general}

\begin{algorithm}
\caption{The \texttt{RPA\_RM} decoding function for general binary-input memoryless channels}    \label{alg:general}
\textbf{Input:} The LLR vector $(L(\mathbi{z}), \mathbi{z}\in \{0,1\}^m)$; the parameters of the Reed-Muller code $m$ and $r$; the maximal number of iterations $N_{\max}$; the exiting threshold $\theta$

\textbf{Output:} The decoded codeword $\hat{c}= (\hat{c}(\mathbi{z}), \mathbi{z}\in \{0,1\}^m)$

\vspace*{0.05in}
\begin{algorithmic}[1]
\State  $\mathbb{E} := \{0,1\}^m$
\For {$i=1,2,\dots,N_{\max}$} 
\State Initialize $(\texttt{cumuLLR}(\mathbi{z}), \mathbi{z}\in\mathbb{E})$ as an all-zero vector indexed by $\mathbi{z}\in\mathbb{E}$
\For {each non-zero $\mathbi{z}_0 \in\mathbb{E}$}
\State Set $\mathbb{B}=\{0,\mathbi{z}_0\}$
\State $L_{/\mathbb{B}} \gets (L_{/\mathbb{B}}(T),T\in\mathbb{E}/\mathbb{B})$
\Comment{$L_{/\mathbb{B}}(T)$ is calculated from $(L(\mathbi{z}), \mathbi{z}\in\mathbb{E})$ according to \eqref{eq:lt}}
\State $\hat{y}_{/\mathbb{B}} \gets \texttt{RPA\_RM}(L_{/\mathbb{B}},m-1,r-1,N_{\max}, \theta)$
\State \Comment{If $r=2$, then we use the Fast Hadamard Transform to decode the first-order RM code}
\For {each $\mathbi{z}\in\mathbb{E}$}
\If {$\hat{y}_{/\mathbb{B}}([\mathbi{z}+\mathbb{B}])=0$}
\State $\texttt{cumuLLR}(\mathbi{z}) \gets \texttt{cumuLLR}(\mathbi{z})+L(\mathbi{z}\oplus \mathbi{z}_0)$ 
\Else  \Comment{$\hat{y}_{/\mathbb{B}}$ is the decoded codeword, so $\hat{y}_{/\mathbb{B}}([\mathbi{z}+\mathbb{B}])$ is either $0$ or $1$}
\State $\texttt{cumuLLR}(\mathbi{z}) \gets \texttt{cumuLLR}(\mathbi{z})-L(\mathbi{z}\oplus \mathbi{z}_0)$ 
\EndIf
\EndFor
\EndFor
\State $\texttt{numofchange} \gets 0$
\State $n \gets 2^m$
\For {each $\mathbi{z}\in\mathbb{E}$}
\State $\texttt{cumuLLR}(\mathbi{z}) \gets \frac{\texttt{cumuLLR}(\mathbi{z})} {n-1}$ 
\If {$|\texttt{cumuLLR}(\mathbi{z})-L(\mathbi{z})|>\theta |L(\mathbi{z})|$}
\State $\texttt{numofchange} \gets \texttt{numofchange}+1$   \Comment{Here addition is between real numbers}
\EndIf
\State $L(\mathbi{z}) \gets \texttt{cumuLLR}(\mathbi{z})$
\EndFor
\If {$\texttt{numofchange} = 0$}
\State \textbf{break}  \Comment{Exit the first for loop of this function}
\EndIf
\EndFor
\For {each $\mathbi{z}\in\mathbb{E}$}
\If {$L(\mathbi{z}) > 0$}
\State $\hat{c}(\mathbi{z}) \gets 0$
\Else  
\State $\hat{c}(\mathbi{z}) \gets 1$
\EndIf
\EndFor
\State \textbf{return} $\hat{c}$
\end{algorithmic}
\end{algorithm}

\end{document}